\pgfplotsset{compat=1.16}
\crefname{lemma}{Lemma}{Lemmata}
\crefname{figure}{Figure}{Figures}
\newcommand{\email}[1]{\href{mailto:#1}{\nolinkurl{#1}}}
\providecommand{\keywords}[1]
{
  \small
  \begin{description}
  	\item [Keywords:] #1
  \end{description}
}
\newtheorem{theorem}{Theorem}
\theoremstyle{plain}
\newtheorem{lemma}[theorem]{Lemma}
\newtheorem{corollary}[theorem]{Corollary}
\newtheorem{proposition}[theorem]{Proposition}
\theoremstyle{definition}
\newtheorem{definition}[theorem]{Definition}
\newtheorem{remark}[theorem]{Remark}
\def\poly{{\mathrm{poly}}}
\def\cc#1{\mathtt{#1}}
\DeclareMathOperator{\FNP}{\cc{FNP}}
\DeclareMathOperator{\NP}{\cc{NP}}
\DeclareMathOperator{\TFNP}{\cc{TFNP}}
\DeclareMathOperator{\TFAP}{\cc{TFAP}}
\DeclareMathOperator{\PPP}{\cc{PPP}}
\DeclareMathOperator{\PWPP}{\cc{PWPP}}
\DeclareMathOperator{\PLC}{\cc{PLC}}
\DeclareMathOperator{\PLS}{\cc{PLS}}
\DeclareMathOperator{\PPADS}{\cc{PPADS}}
\def\problem#1{\textsc{#1}\xspace}
\def\Pigeon{\problem{Pigeon}}
\def\WeakPigeon{\problem{Weak Pigeon}}
\def\WardSzabo{\problem{Ward-Szab{\'o}}}
\def\BasicWardSzabo{\problem{Basic} \WardSzabo}
\def\CategorizedPigeon{\problem{Categorized Pigeon}}
\def\AlternativeCategorizedPigeon{\problem{Alternative} \CategorizedPigeon}
\def\WeakAlternativeCategorizedPigeon{\problem{Weak Alternative Categorized Pigeon}}
\def\InjectiveCategorizedPigeon{\problem{Injective} \CategorizedPigeon}
\def\SinkofLine{\problem{Sink of Line}}
\def\MultiSourceSinkofLine{\problem{Multiple Source} \SinkofLine}
\title{
    The $\PPP$-completeness of the Ward-Szab\'{o} theorem
    % On a better complexity upper bound of Ward-Szabo theorem
}
\author{
	Takashi Ishizuka\\
	National Institute of Technology, Kochi Colledge\\
	\email{ishizuka@kochi-ct.ac.jp}
}
\begin{document}

%\waterprint{Preprint}
\begin{titlepage}
    \clearpage
    
    \maketitle
    
    \begin{abstract}
        \citeauthor{WS94} \cite{WS94} have shown that a complete graph with $N^2$ nodes whose edges are colored by $N$ colors and that has at least two colors contains a bichromatic triangle.
        This fact leads us to a total search problem: Given an edge-coloring on a complete graph with $N^2$ nodes using at least two colors and at most $N$ colors, find a bichromatic triangle.
        \citeauthor{BFHRS23} \cite{BFHRS23} have proven that such a total search problem, called \WardSzabo, is $\PWPP$-hard and belongs to the class $\TFNP$, a class for total search problems in which the correctness of every candidate solution is efficiently verifiable.
        However, it is open which $\TFNP$ subclass contains \WardSzabo.
        This paper will improve the computational complexity of \WardSzabo.
        We prove that \WardSzabo is a complete problem for the complexity class $\PPP$, a $\TFNP$ subclass of problems in which the existence of solutions is guaranteed by the pigeonhole principle.
    \end{abstract}
    
    \keywords{$\TFNP$, $\PPP$, the pigeonhole principle, the Ward-Szab{\'o} theorem}
    
    \thispagestyle{empty}
\end{titlepage}
%%%
% Start: Main Manuscript
%%%

% \linenumbers

\section{Introduction}
\label{Sec:Intro}
A \textit{total search problem} is a problem for finding a solution whose existence is guaranteed by some mathematical lemma.
The complexity class $\TFNP$ consists of all total search problems for which the correctness of every potential solution is effortlessly checkable.
It is well known that $\TFNP$ contains many important search problems that we want to solve efficiently, but at present, no one knows how to do so; for example, computing a mixed Nash equilibrium for a two-player game and computing a prime divisor for a positive integer.

We are interested in the computational aspects of extremal combinatorics on a graph with exponentially many nodes.
This paper focuses on the computational aspects of Ward-Zab\'{o} theorem \cite{WS94}.
Let $K_N$ be the complete graph with $N$ nodes.
An edge-coloring function $C: E(K_{N}) \to [m]$ on $K_N$ is said to be \textit{swell coloring} if $C$ uses at least two colors and there is no birchomatic triangle; each triangle contains exactly one or three colors.
\citeauthor{WS94} \cite{WS94} have proven that we need at least $\sqrt{N} + 1$ colors to be swell coloring on $K_{N}$.

\begin{theorem}[\citeauthor{WS94} \cite{WS94}] \label{Theorem:Ward-Szabo}
    The complete graph on $N$ nodes cannot be swell-colored with fewer than $\sqrt{N} + 1$ colors, and this bound is tight.
\end{theorem}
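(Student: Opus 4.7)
The plan is to reduce the theorem to two short structural lemmas about swell colorings and then finish with a pigeonhole-style count; tightness will follow from a standard affine-plane construction. I would first establish that in any swell coloring, each color class is a disjoint union of cliques. The argument uses a single triangle: whenever two edges sharing a common endpoint both have color $c$, the third edge of the triangle they span must also have color $c$, since otherwise the triangle would be bichromatic. Hence same-color adjacency is transitive, and each connected component of a monochromatic subgraph is a clique.

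Next I would prove the quantitative lemma: in a swell coloring with $m \geq 2$ colors, every monochromatic clique has size at most $m-1$. Given a color-$c$ clique $C$, not all edges of $K_N$ can be colored $c$, so some vertex $w$ lies outside $C$, and I would show that the $|C|$ edges from $w$ to $C$ carry pairwise distinct colors, none of which is $c$. Both claims reduce to the same triangle argument: if $w u_1$ and $w u_2$ both had color $c'$, then the triangle on $\{w,u_1,u_2\}$, whose third edge $u_1 u_2$ has color $c$, would be bichromatic unless $c' = c$; and $c' = c$ would place $w$ in the same color-$c$ clique as $u_1$, namely $C$, contradicting $w \notin C$. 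Hence $|C|$ distinct colors from $[m]\setminus\{c\}$ appear, so $|C| \leq m-1$. Combining the two lemmas finishes the lower bound: fix any vertex $v$, let $c_1,\ldots,c_k$ be the colors incident to $v$, and let $n_i$ be the number of $c_i$-edges at $v$. Each set $\{v\}\cup N_{c_i}(v)$ is a monochromatic clique of size $n_i+1 \leq m-1$, so $n_i \leq m-2$. Summing and using the trivial bound $k \leq m$ yields $N-1 = \sum_i n_i \leq m(m-2) = (m-1)^2 - 1$, hence $m \geq \sqrt{N}+1$.

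For tightness, I would use the affine plane of order $q$ whenever $q$ is a prime power: its $q^2$ points form the vertex set, and coloring each edge by the parallel class of the unique line joining its endpoints gives a swell coloring with $q+1 = \sqrt{N}+1$ colors. Monochromatic triangles are exactly collinear triples, and non-collinear triangles necessarily use three distinct parallel classes because two distinct lines meeting at a common point cannot be parallel. The step I expect to require the most care is the clique-size bound, because one must simultaneously verify that the colors on the $w$--$C$ star are pairwise distinct and all different from $c$, while keeping in mind that the color class of $c$ may contain other disjoint cliques in addition to $C$.
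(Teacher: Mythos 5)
The paper does not prove this theorem; it imports it from Ward--Szab\'o \cite{WS94} as a black box, so there is no internal proof to compare against. Your argument is correct and is, in essence, the standard proof of the Ward--Szab\'o bound. The two structural lemmata are sound: the triangle argument does show each color class is a disjoint union of cliques, and your clique-size bound $|C|\le m-1$ is correctly handled, including the two points you flag as delicate (taking $C$ to be a full component of the color-$c$ graph so that $C(wu)=c$ forces $w\in C$, and the existence of $w\notin C$ because $C=V$ would force a one-color coloring). The final count $N-1=\sum_i n_i\le m(m-2)=(m-1)^2-1$ gives $N\le(m-1)^2$ and hence $m\ge\sqrt{N}+1$, as required. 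The affine-plane construction for tightness is also the standard one; the only caveat worth stating explicitly is that it yields tightness only when $\sqrt{N}$ is a prime power (or, more generally, the order of an affine plane), which is the scope in which the original tightness claim is meant. One could also note that for the triangle argument in the clique-size lemma you implicitly use $|C|\ge 2$, but the bound $|C|\le m-1$ is trivial for $|C|=1$ since $m\ge 2$, so nothing is lost.
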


\cref{Theorem:Ward-Szabo} leads us to a total search problem for finding a bichromatic triangle on a complete graph when we are given an edge-coloring with an insufficiency number of colors.
Recently, \citeauthor{BFHRS23} \cite{BFHRS23} has formulated such a $\TFNP$ problem, called \WardSzabo.

\begin{definition} \label{Def:WardSzaboProblem}
    The problem \WardSzabo, is defined as follows: Given a Boolean circuit $C: \{ 0, 1 \}^{2n} \times \{ 0, 1 \}^{2n} \to \{ 0, 1 \}^n$ that computes an edge-coloring for every edge on the complete graph $K_{2^{2n}}$, and three distinct nodes $a, b, c \in \{ 0, 1 \}^{2n}$ with $C(a, b) \neq C(a, c)$, find one of the following:
    \begin{enumerate*}[label = (\arabic*)]
        \item three distinct nodes $x, y, z \in \{ 0, 1 \}^{2n}$ such that $C(x, y) = C(x, z) \neq C(y, z)$;
        \item two distinct nodes $x, y \in \{ 0, 1 \}^{2n}$ such that $C(x, y) \neq C(y, x)$, i.e., a violation that $C$ is symmetric.
    \end{enumerate*}
\end{definition}

\noindent
For every \WardSzabo instance, we consider an edge coloring using at least two colors and at most $2^{n}$ colors on the complete graph with $2^{2n}$ nodes.
From \cref{Theorem:Ward-Szabo}, we know that such a coloring cannot be a swell coloring . Hence, there must exist a triangle that contains just two colors.

\citeauthor{BFHRS23} \cite{BFHRS23} have proven that \WardSzabo lies on between the classes $\PWPP$ and $\TFNP$.
This paper improves the computational complexity of this problem.
Specifically, we show that \WardSzabo is in the class $\PPP$, a $\TFNP$ subclass of the problem whose totality is guaranteed by the \textit{pigeonhole principle}.

\begin{theorem} \label{main_theorem}
    The problem \WardSzabo is $\PPP$-complete.
\end{theorem}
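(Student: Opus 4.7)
The plan is to exhibit a polynomial-time many-one reduction from \WardSzabo to \Pigeon, thereby placing \WardSzabo in $\PPP$. Given an instance $(C,a,b,c)$ with $\gamma_1 := C(a,b) \neq C(a,c) =: \gamma_2$, I may assume $C$ is symmetric on the (polynomially many) edges that we actually read, as any discovered asymmetry is itself a type-(2) solution. The reduction outputs the Boolean circuit $f:\{0,1\}^{2n}\to\{0,1\}^{2n}$ defined by
\[
f(u) \;=\; \begin{cases}
(C(a,u)\oplus\gamma_2)\,\|\,(C(b,u)\oplus\gamma_2) & \text{if } C(a,u)\neq\gamma_1,\\
(\gamma_1\oplus\gamma_2)\,\|\,(C(c,u)\oplus\gamma_1) & \text{if } C(a,u)=\gamma_1,
\end{cases}
\]
where $\|$ denotes $n$-bit concatenation. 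Because $\gamma_1\oplus\gamma_2\neq 0^n$, the leading $n$ bits of $f$ distinguish the two branches: any zero-preimage of $f$ must lie in the first branch, and every collision keeps both inputs within a single branch.

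The correctness argument then shows that any \Pigeon solution on $f$ produces a bichromatic triangle. A zero-preimage $u$ forces $C(a,u)=C(b,u)=\gamma_2$, so the triangle $(a,b,u)$ has colors $(\gamma_1,\gamma_2,\gamma_2)$, which is bichromatic. A first-branch collision $f(u_1)=f(u_2)$ yields common $\alpha := C(a,u_i)\neq\gamma_1$ and $\beta := C(b,u_i)$: if $\alpha=\beta$, then $(a,b,u_1)$ has colors $(\gamma_1,\alpha,\alpha)$ and is bichromatic; if $\alpha\neq\beta$, then $C(u_1,u_2)$ cannot equal both $\alpha$ and $\beta$, so at least one of $(a,u_1,u_2)$ and $(b,u_1,u_2)$ is bichromatic. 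A second-branch collision yields $u_1\neq u_2$ with $C(a,u_i)=\gamma_1$ and common $\delta := C(c,u_i)$: if $C(u_1,u_2)\neq\delta$ then $(c,u_1,u_2)$ is bichromatic; otherwise if $\delta\neq\gamma_1$ then $(a,u_1,u_2)$ has colors $(\gamma_1,\gamma_1,\delta)$ and is bichromatic; otherwise $\delta=\gamma_1$ and $(a,c,u_1)$ has colors $(\gamma_2,\gamma_1,\gamma_1)$, again bichromatic.

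The main obstacle in devising this reduction is the design of the case split itself. A naive \Pigeon function built only from the pair $(a,b)$ leaves a ``bad case'' of collisions at $(\gamma_1,\gamma_1)$, which can correspond to a monochromatic $K_4$ on $\{a,b,u_1,u_2\}$; one can construct adversarial colorings in which no bichromatic triangle is then discoverable even within $\{a,b,c,u_1,u_2\}$. The case split above channels exactly this bad case into the second branch, where the third witness $c$ --- with $C(a,c)=\gamma_2\neq\gamma_1$ --- breaks the monochromaticity and forces a bichromatic triangle. Verifying that no edge case (for instance $u \in \{a,b,c\}$) escapes the analysis is routine, since any such solution either yields a valid bichromatic triangle or exposes an asymmetry of $C$. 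This completes the plan, proving \cref{main_theorem}.
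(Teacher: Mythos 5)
Your reduction is correct, and its combinatorial core coincides with the paper's: you partition the nodes by their edge color to $a$ (your first $n$ output bits play the role of the paper's partition function $p(u)=C(a,u)$), and within a color class you look for two nodes agreeing on their color to a second pivot, namely $b$ unless the class is the $C(a,b)$-class, in which case it is $c$ (your last $n$ bits play the role of the paper's hole function $h$). The difference is one of packaging: the paper factors the argument through a chain of intermediate problems (\WardSzabo to \BasicWardSzabo to \AlternativeCategorizedPigeon to \CategorizedPigeon with one removed element to \Pigeon), using an explicit ``removed element'' $\pi=a$ and a special target $v^*=0^{m_1}h(\pi)$, whereas you inline the entire composition into a single circuit and do the bookkeeping with XOR offsets: shifting by $\gamma_2$ makes the target $0^{2n}$ meaningful (a zero-preimage is a node joined to both $a$ and $b$ by the color $\gamma_2\neq\gamma_1$), and the fact that $\gamma_1\oplus\gamma_2\neq 0^n$ simultaneously blocks second-branch zero-preimages and forbids cross-branch collisions, which is exactly the role the removed element and its remapping play in the paper. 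Your version buys a shorter, self-contained proof of the upper bound, at the cost of discarding the reusable intermediate problem (\CategorizedPigeon) that the paper goes on to study in its own right; the edge cases $u\in\{a,b,c\}$ that you defer as routine do all resolve (membership of $b$, respectively $c$, in the wrong branch is impossible since $C(a,b)=\gamma_1$ and $C(a,c)=\gamma_2$, and each remaining degeneracy either yields a nondegenerate bichromatic triangle inside $\{a,b,c,u_1,u_2\}$ or exposes an asymmetry of $C$), so the argument is complete.
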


\subsection{Related Work}
\citeauthor{BFHRS23} \cite{BFHRS23} have investigated several $\TFNP$ problems based on extremal combinatorics.
\cite{BFHRS23} has left some interesting open questions worth considering. The complexity of \WardSzabo is one of them.
\citeauthor{BGS25} \cite{BGS25} have studied the complexity of problems related to the coding theory and extremal combinatorics.
\citeauthor{Ghentiyala25} \cite{Ghentiyala25} has studied the connection between the $\TFNP$ problems from extremal combinatorics and the recently discovered $\TFNP$ subclass $\TFAP$ introduced by \citeauthor{Li24} \cite{Li24}.
\cite{Ghentiyala25} has also shown the $\PLC$ (Polynomial Long Choice) containment of \WardSzabo and its $\PPP$-hardness.

It is known that the class $\PPP$ has a relationship with cryptography.
For instance, \citeauthor{Pap94} \cite{Pap94} has shown that the existence of a polynomial-time algorithm for a $\PPP$-complete problem implies that there is no one-way permutation.
\citeauthor{SZZ18} \cite{SZZ18} and \citeauthor{HV21} \cite{HV21} have studied the relationship the classes $\PPP$ and $\PWPP$ with total search problems inspired by cryptography.
The $\PPP$-completeness of a constrained version of \problem{Short Integer Solution} has been shown in \cite{SZZ18}. The $\PWPP$-completeness of \problem{Discrete Logarithm Problem} has been shown in \cite{HV21}. 

Recently, the computational complexity of some generalizations of the class $\PPP$ has been studied \cite{JLRX24, Ishizuka25, PPY23}.
\citeauthor{PPY23} \cite{PPY23} have introduced a generalization of $\PPP$ in which it is iteratively applied by the pigeonhole principle; they call such a complexity class $\PLC$.
\citeauthor{JLRX24}, \cite{JLRX24} have introduced a generalization of $\PPP$ in which we find a multiple collision, and they call the principle that guarantees the existence of a multiple collision the \textit{peaking order principle}. Moreover, \cite{JLRX24} has introduced the pigeonhole hierarchy.
\citeauthor{Ishizuka25} \cite{Ishizuka25} has considered the complexity of the pigeonhole principle over a quotient set and introduced a new $\TFNP$ subclass that contains not only $\PPP$, but also $\PLS$, which is a class of search problems that can be solved by a local search method.

Finally, we mention a key property of the complexity class $\PPP$.
\citeauthor{FGPR24} \cite{FGPR24} has proven that $\PPP$ is unlikely closed under the Turing reduction.

\section{Preliminaries}
\label{Sec:Preliminaries}
We denote by $\mathbb{Z}$ the set of all integers.
For every integer $a \in \mathbb{Z}$ and for each binary relation $\square \in \{ <, \le, >, \ge \}$, we define $\mathbb{Z}_{\square a} := \{ x \in \mathbb{Z} : x \square a \}$.

Let $A, B$ be finite sets, and let $f: A \to B$ be a function.
We denote by $|A|$ the cardinality of the set $A$, i.e., the number of elements in $A$.
For each subset $C \subseteq A$, we denote the set $\{ f(a) : \forall a \in A \} \subseteq B$ as $f(C)$.

Let $\{ 0, 1 \}^*$ denote the set of binary strings with a finite length.
For each string $x \in \{0, 1\}^*$, we denote by $|x|$ the length of $x$.
For each positive integer $n \in \mathbb{Z}_{> 0}$, we write $\{ 0, 1 \}^n$ for the set of binary strings with the length $n$.
Throughout this paper, we sometimes partition a string into two or more parts of strings. 
For example, we sometimes split an $(m_1 + m_2)$-bit string $x$ into a pair of an $m_1$-bit string $y_1$ and an $m_2$-bit string $y_2$ for all two positive integers $m_1$ and $m_2$ in $\mathbb{Z}_{> 0}$, and we denoted as $x = y_1 \| y_2$ or $x = (y_1, y_2)$, where $y_1$ and $y_2$ are the first $m_1$ and last $m_2$ bits of $x$, respectively.

\subsection{Total Search Problems in $\NP$}
\label{Sec:TFNP}
Let $R \subseteq \{ 0, 1 \}^* \times \{ 0, 1 \}^*$ be a relation.
We say that $R$ is \textit{polynomially balanced} if there is a polynomial $p: \mathbb{Z}_{\ge 0} \to \mathbb{Z}_{\ge 0}$ such that for each $(x, y) \in R$, it holds that $|y| \le p(|x|)$.
We say that $R$ is \textit{polynomial-time decidable} if for each pair of strings $(x, y) \in \{ 0, 1 \}^* \times \{ 0, 1 \}^*$, we can decide whether $(x, y) \in R$ in polynomial time.
We say that $R$ is \textit{total} if for each string $x \in \{0, 1\}^*$, there is at least one string $y \in \{ 0, 1 \}^*$ such that $(x, y) \in R$.

For a relation $R \in \{ 0, 1 \}^* \times \{ 0, 1 \}^*$, the search problem with respect to $R$ is defined as follows: Given a string $x \in \{ 0, 1 \}^*$, find a string $y \in \{ 0, 1 \}^*$ such that $(x, y) \in R$ if such a $y$ exists, otherwise report ``No.'' For simplicity, we call the search problem with respect to $R$ the search problem $R$.
If $R$ is \textit{total}, we call such a search problem a total search problem.
The complexity class $\FNP$ is the set of all search problems with respect to polynomially balanced and polynomial-time decidable relations. The complexity class $\TFNP$ is the set of all total search problems belonging to $\FNP$. By definition, it holds that $\TFNP \subseteq \FNP$.

Let $R, S \subseteq \{ 0, 1 \}^* \times \{ 0, 1 \}^*$ be two total search problems. A polynomial-time reduction from $R$ to $S$ is defined by two polynomial-time computable functions $f: \{ 0, 1 \}^* \to \{ 0, 1 \}^*$ satisfying that $(x, g(x, y)) \in R$ whenever $(f(x), y) \in S$. Thus, the function $f$ maps an input $x$ of $R$ to the input $f(x)$ of $S$, and the other function $g$ maps a solution $y$ to the input $f(x)$ for the search problem $S$ to a solution $g(x, y)$ to the input $x$ for the search problem $R$.
We say that two search problems $R$ and $S$ are (polynomially) equivalent if there are polynomial-time reductions from $R$ to $S$ and from $S$ to $R$.

Let $\mathcal{C}$ be a complexity class, and let $R \subseteq \{ 0, 1 \}^* \times \{ 0, 1 \}^*$ be a search problem.
We say that $R$ is $\mathcal{C}$-hard if there is a polynomial-time reduction from $S$ to $R$ for every search problem $S$ in $\mathcal{C}$.
The problem $R$ is said to be $\mathcal{C}$-complete if $R$ is $\mathcal{C}$-hard and belongs to the class $\mathcal{C}$.

\paragraph{Complexity Classes $\PPP$ and $\PWPP$}
\label{Sec:PPP}
\citeauthor{Pap94} \cite{Pap94} has introduced a $\TFNP$ subclass based on the pigeonhole principle, called the complexity class $\PPP$.
This class is defined as the set of all search problems that are reducible to the search problem \Pigeon (see \cref{Def:Pigeon}) in polynomial time.

The pigeonhole principle (also known as \textit{Dirichlet's principle}) \cite{Jukna11} states that for all positive integers $n, m, \ell$ with $n > m \ell$, when we partition the set $\{ 1, 2, \dots, n \}$ into $m$ groups, there always exists a group with at least $\ell+1$ elements.

The totality of the canonical $\PPP$-complete problem \Pigeon is based on the pigeonhole principle, where $\ell = 1$.
For every function $f$ that maps each $n$-bit string to an $n$-bit string, we have a string $x$ such that $f(x) = v^*$ if the function $f$ is bijective; otherwise, we have a collision, i.e. two distinct strings $x$ and $y$ such that $f(x) = f(y)$ by the pigeonhole principle.

\begin{definition} \label{Def:Pigeon}
    The problem \Pigeon is defined as follows: Given a Boolean circuit $f: \{ 0, 1 \}^n \to \{ 0, 1 \}^n$ and a special element $v^* \in \{ 0, 1 \}^*$, find one the following:
    \begin{enumerate*}[label = (\arabic*)]
        \item an element $x \in \{ 0, 1 \}^*$ such that $f(x) = v^*$
        \item two distinct elements $x, y \in \{ 0, 1 \}^*$ such taht $f(x) = f(y)$.
    \end{enumerate*}
\end{definition}

\citeauthor{Jer16} \cite{Jer16} has introduced the complexity class $\PWPP$, which is a weaker variant of the complexity class $\PPP$. 
This class is defined as the set of all search problems that are reducible to the search problem \WeakPigeon (see \cref{Def:WeakPigeon}) in polynomial time.
By definition, it is not hard to see that $\PWPP$ is a subclass of $\PPP$.

\begin{definition} \label{Def:WeakPigeon}
    The problem \WeakPigeon is defined as follows: Given a Boolean circuit $f: \{ 0, 1 \}^{n} \to \{ 0, 1 \}^{n-1}$, find two distinct elements $x$ and $y$ in $\{ 0, 1 \}^{n}$ such that $f(x) = f(y)$.
\end{definition}

Finally, \citeauthor{Jer16} \cite{Jer16} has also introduced a useful parameterized formulation of \Pigeon.
For any $A(n) > B(n)$, the problem $\Pigeon_{B(n)}^{A(n)}$ is defined as follows: Given a Boolean circuit $f:[A(n)] \to [B(n)]$, which has $\lceil \log_2A(n) \rceil$ inputs and $\lceil \log_2 B(n) \rceil$ outputs, find two distinct elements $x$ and $y$ in $[A(n)]$ such that $f(x) = f(y)$.
From the formulation, \WeakPigeon is equivalent to $\Pigeon_{2^{n-1}}^{2^{n}}$.
More precisely, the following statement holds.

\begin{proposition}[\citeauthor{Jer16} \cite{Jer16}] \label{Prop:pigeon_equiv}
    The next two relationships holds:
    \begin{enumerate}[label = (\roman*)]
        \item \Pigeon is equivalent to $\Pigeon_{2^n}^{2^{n}+1}$; and
        \item \WeakPigeon is equivalent to $\Pigeon_{B(n)}^{A(n)}$ whenever $A(n) \ge (1 + \poly(\log B(n)))B(n)$.
    \end{enumerate}
\end{proposition}

\section{Our Results}
\label{Sec:Results}
To prove \cref{main_theorem}, we show the following two lemmata:

\begin{lemma} \label{Lemma:main_containment}
    \WardSzabo belongs to $\PPP$
\end{lemma}

\begin{lemma} \label{Lemma:main_hardness}
    \WardSzabo is $\PPP$-hard.
\end{lemma}

\noindent
\cref{main_theorem} immediately follows from these two above lemmata.
The proofs of \cref{Lemma:main_containment,Lemma:main_hardness} can be found in \cref{Sec:UpperBound,Sec:Hardness}, respectively.

\subsection{A Better Upper Bound of \WardSzabo}
\label{Sec:UpperBound}
For proving \cref{Lemma:main_containment}, we first observe an usefule property of \WardSzabo, and we introduce a slightly harder $\TFNP$ problem than the original \WardSzabo formulated in \cite{BFHRS23}.
For every \WardSzabo instance $\langle C: \{ 0, 1 \}^{2n} \to \{ 0, 1 \}^{n}; \{ a, b, c \} \rangle$, we say that a triangle $\{ x, y, z \}$ on the complete graph $K_{2^{2n}}$ is \textit{basic} if $\{ x, y, z \} \cap \{ a, b, c \} \neq \emptyset$.
%Thus, a basic triangle contains at least one of $\{ a, b, c \}$.

By the pigeonhole principle, we show that every \WardSzabo instance has a basic bichromatic triangle.
Intuitively, this follows from the following properties.
Since the node $a$ is incident with $2^{2n} - 1$ edges and these edges are colored with $2^{n}$, there are at least $2^n$ edges of $a$ that are colored by the same color.
Note that we also know that the colors of two edges $\{ a, b \}$ and $\{ a, c \}$ are different.
Again, by the pigeonhole principle, we can guarantee the existence of a basic bichromatic triangle.

\begin{proposition} \label{Prop:BasicProperty}
	Let $K_{N^2}$ be a complete graph with $N^2$ nodes, and let $a, b, c$ be distinct three nodes on $K_{N^2}$.
	For every edge-coloring function $C: E(K_{N^2}) \to [N]$ such that $C(a, b) \neq C(a, c)$, there is a bichromatic triangle  $\{ x, y, z \}$ such that $\{ a, b, c \} \cap \{ x, y, z \} \neq \emptyset$.
\end{proposition}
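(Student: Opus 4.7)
The plan is to apply the pigeonhole principle twice: first to the edges incident to $a$, and then to the edges from one of the nodes in $\{b, c\}$ into a large color class of $a$'s neighbors.

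First I partition the $N^2 - 1$ nodes other than $a$ into color classes $S_i := \{ v \neq a : C(a, v) = i \}$ for $i \in [N]$. Because $\sum_{i \in [N]} |S_i| = N^2 - 1$ spread over at most $N$ classes, the pigeonhole principle yields some index $i^*$ with $|S_{i^*}| \geq \lceil (N^2 - 1)/N \rceil = N$. The hypothesis $C(a, b) \neq C(a, c)$ forces $b$ and $c$ into distinct classes, so at least one of them lies outside $S_{i^*}$; I pick this node and call it $z$, with $z \in S_{j^*}$ for some $j^* \neq i^*$.

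Next I dispose of the easy cases that produce a basic bichromatic triangle directly. If there exist two distinct $x, x' \in S_{i^*}$ with $C(x, x') \neq i^*$, then $\{ a, x, x' \}$ is a basic bichromatic triangle and the proof is complete; hence I may assume that every edge inside $S_{i^*}$ is colored $i^*$. Likewise, if some $x \in S_{i^*}$ satisfies $C(z, x) \in \{ i^*, j^* \}$, then $\{ a, z, x \}$ is a basic bichromatic triangle; hence I may further assume $C(z, x) \in [N] \setminus \{ i^*, j^* \}$ for every $x \in S_{i^*}$.

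In the remaining case the $|S_{i^*}| \geq N$ edges from $z$ into $S_{i^*}$ draw their colors from a palette of size only $N - 2$, so a second application of the pigeonhole principle yields distinct $x, x' \in S_{i^*}$ with $C(z, x) = C(z, x') =: c_0$, where $c_0 \neq i^*$ by construction. Combined with the reduction $C(x, x') = i^*$, the triangle $\{ z, x, x' \}$ uses exactly the two colors $c_0$ and $i^*$, and it is basic because $z \in \{ b, c \}$. The only real technical point to verify is that the first pigeonhole produces a class of size at least $N - 1$ (so that the second step succeeds against a palette of size $N - 2$); having $N^2$ nodes distributed over $N$ colors yields $|S_{i^*}| \geq N$, which is comfortably enough.
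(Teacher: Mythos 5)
Your proof is correct and follows essentially the same double-pigeonhole strategy as the paper: first extract a color class $S_{i^*}$ of size at least $N$ among the edges at $a$, reduce to the case where $S_{i^*}$ is internally monochromatic, and then pigeonhole the edges from a node $z \in \{b, c\}$ lying outside $S_{i^*}$ into that class. The only cosmetic difference is in the second application: the paper pigeonholes the $N+1$ targets $V_\chi \cup \{a\}$ against all $N$ colors, whereas you first peel off the colors $i^*$ and $j^*$ as immediate-win cases and pigeonhole $N$ targets against the remaining $N-2$ colors; both are valid.
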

\begin{proof}
	Snce $K_{N^2}$ is the complete graph with $N^2$ nodes, the node $a$ is connected with $N^2 - 1$ nodes.
	By the pigeonhole principle, we can see that there is a color $\chi \in [N]$ such that at least $N$ edges that are incident on $a$ have the color $\chi$. 
	Let $V_{\chi} := \{ v \in V(K_{N^2}) \setminus \{ a \} : C(\{ a, v \}) = \chi \}$ be the set of nodes connected with $a$ by the color $\chi$.
	If there are two nodes $y, z \in V_{\chi}$ such that $C(\{ y, z \}) \neq \chi$, then we obtain a basic bichromatic triangle $\{ a, y, z \}$ satasfying that $C(\{ a, y \}) = C(\{ a, z \}) \neq C(\{ y, z \})$.
	
	Below, we suppose that each pair of two nodes $y, z$ in $V_{\chi}$ satisfies that $C(\{ y, z \}) = \chi$.
	From the assumption that $C(\{ a, b \}) \ne C(\{ a, c \})$, at least one of $b$ and $c$ is outside of $V_{\chi}$. Without loss of generality, we can assume that $b \notin V_{\chi}$.
	Again, by the pigeonhole principle, we have two distinct nodes $y, z$ in $V_{\chi} \cup \{ a, \}$ such that $C(\{ b, y \}) = C(\{ b, z \})$. At least one of $y$ and $z$ is not $a$. Without loss of generality, $y \neq a$.
	If $C(\{ b, y \}) = \chi$, then the triangle $\{ a, b, y \}$ is a bichromatic triangle since $C(\{ a, b \}) \neq C(\{ a, y \})$.
	Otherewise, it holds that $C(\{ b, y \}) = C(\{ b, z \}) \neq C(\{ y, z \})$. This implies that the triangle $\{ b, y, z \}$ is bichromatic.
\end{proof}

\cref{Prop:BasicProperty} implies that there always exists a basic triangle that is bichromatic for every \WardSzabo instance.
This leads us to formulate a slightly harder $\TFNP$ problem related to Ward-Szab\'{o} theorem, the problem of finding a basic bichromatic triangle on the complete graph with $2^{2n}$ nodes. We call such a problem \BasicWardSzabo.

\begin{definition} \label{Def:BasicWardSzaboProblem}
    The problem \BasicWardSzabo is defined as follows: Given a Boolean circuit $C: \{ 0, 1 \}^{2n} \times \{ 0, 1 \}^{2n} \to \{ 0, 1 \}^n$, and three distinct nodes $a, b, c \in \{ 0, 1 \}^{2n}$ with $C(a, b) \neq C(a, c)$, find one of the following:
    \begin{enumerate*}[label = (\arabic*)]
        \item three distinct nodes $x, y, z \in \{ 0, 1 \}^{2n}$ such that $\{ x, y, z \} \cap \{ a, b, c \} \neq \emptyset$ and $C(x, y) = C(x, z) \neq C(y, z)$;
        \item two distinct nodes $x, y \in \{ 0, 1 \}^{2n}$ such that $C(x, y) \neq C(y, x)$, i.e., a violation that $C$ is symmetric.
    \end{enumerate*}
\end{definition}

\begin{lemma} \label{Lemma:WarSzabo-to-BasicWardszabo}
	\WardSzabo is reducible to \BasicWardSzabo in polynomial time.
\end{lemma}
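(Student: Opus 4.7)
The plan is to observe that the desired reduction can be taken to be the identity. Given a \WardSzabo instance $\langle C; a, b, c \rangle$, I would feed the same tuple $\langle C; a, b, c \rangle$ to the \BasicWardSzabo oracle, and return whatever the oracle outputs as the answer to the original \WardSzabo instance. In the standard reduction notation $(x, g(x,y)) \in R$ whenever $(f(x), y) \in S$, this corresponds to $f = \mathrm{id}$ and $g(x, y) = y$. Both maps are trivially polynomial-time computable, and $f$ produces a valid \BasicWardSzabo instance because the input formats coincide and the precondition $C(a,b) \neq C(a,c)$ is preserved.

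For correctness, I would simply observe that the solution space of \BasicWardSzabo is a subset of the solution space of \WardSzabo on the same input. Concretely: a type (2) solution (a symmetry violation $x, y$ with $C(x,y) \neq C(y,x)$) has identical definitions in both problems; a type (1) solution of \BasicWardSzabo is a triple $(x,y,z)$ satisfying $C(x,y) = C(x,z) \neq C(y,z)$ together with the extra restriction $\{x,y,z\} \cap \{a,b,c\} \neq \emptyset$, and dropping this extra restriction yields exactly a type (1) solution of \WardSzabo. Hence any solution the \BasicWardSzabo oracle returns is already a legitimate \WardSzabo solution, with no post-processing needed.

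To make the reduction well-defined, I should point out that the \BasicWardSzabo instance is guaranteed to admit a solution: \cref{Prop:BasicProperty} ensures the existence of a basic bichromatic triangle whenever $C$ is an actual (symmetric) edge-coloring, and if $C$ fails to be symmetric, then a symmetry violation supplies a valid output. Thus the oracle is queried only on total instances, and the reduction is well-posed. There is essentially no technical obstacle here; the lemma amounts to the observation that \BasicWardSzabo is a strictly more restrictive search problem than \WardSzabo on the same inputs.
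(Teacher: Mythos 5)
Your proposal is correct and takes exactly the same route as the paper, whose entire proof is the one-line observation that every \BasicWardSzabo solution is already a \WardSzabo solution on the identical instance; you have merely spelled out the identity reduction and the totality check in more detail. No gap here.
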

\begin{proof}
	Every solution to \BasicWardSzabo is also a solution to \WardSzabo.
\end{proof}

Therefore, it suffices to show the $\PPP$ containment of \BasicWardSzabo in order to prove \cref{Lemma:main_containment}.
We now introduce a new variant of \Pigeon, called \CategorizedPigeon, that helps us to construct a reduction from \BasicWardSzabo to a $\PPP$ problem.
Our new variant of \Pigeon is inspired by two previous works \cite{JLRX24, PPY23} in the following sense:

\begin{description}
	\item [the peaking order principle] \citeauthor{JLRX24} \cite{JLRX24} have introduced a new $\TFNP$ problem based on the pigeonhole principle when $\ell$ is greater than one. Our new variant also depends on the pigeonhole principle with $\ell > 1$.
	\item [the long choice principle] \citeauthor{PPY23} \cite{PPY23} have introduced a new $\TFNP$ problem that is iteratively applied by the pigeonhole principle. Our new variant applies the pigeonhole principle twice.
\end{description}

\begin{definition}
    The problem $\CategorizedPigeon$ is defined as follows: Given two Boolean circuits $p: \{ 0, 1 \}^{m_1 + m_2} \to \{ 0, 1 \}^{m_1}$ and $h: \{ 0, 1 \}^{m_1 + m_2} \to \{ 0, 1 \}^{m_2}$ and a list of $k$ removed elements $\Pi := \{ \pi_1, \pi_2, \dots, \pi_k \} \subseteq \{ 0, 1 \}^{m_1 + m_2}$, where $k < 2^{m_1}$, find one of the following:
    \begin{enumerate*}[label = (\arabic*)]
        \item an element $x \in \{ 0, 1 \}^{m_1 + m_2} \setminus \Pi$ such that $h(x) \in h(\Pi)$;
        \item two distinct elements $x, y \in \{ 0, 1 \}^{m_1 + m_2} \setminus \Pi$ such that $p(x) = p(y)$ and $h(x) = h(y)$.
    \end{enumerate*}
\end{definition}

The totality of the problem \CategorizedPigeon is based on the pigeonhole principle.
If the set with $2^{m_1 + m_2} - k$ elements is partitioned into $2^{m_1}$ groups, where $k < 2^{m_1}$, then there is a group with at least $2^{m_2}$ elements.
For every \CategorizedPigeon instance, the function $p$ implicitly represents a partition the set $\{ 0, 1 \}^{m_1 + m_2} \setminus \Pi$ into $2^{m_2}$ groups $M_1, M_2, \dots, M_{2^{m_2}}$. Hence, $p(x)$ implies that the element $x$ is partitioned into the $p(x)$-th group.
Again, by the pigeonhole principle, every map $h$ from the set with $2^{m_2}$ elements to another set with $2^{m_2} - 1$ elements always has a collision, i.e., two distinct elements $x$ and $y$ such that $h(x) = h(y)$.
Roughly speaking, for every \CategorizedPigeon instance, the function $h$ represents a \Pigeon instance with $m_2$ input and output gates, respectively.

\begin{remark}
    By the definition of \CategorizedPigeon, we are explicitly given a list of at most $2^{m_1} - 1$ removed elements.
    Thus, we acknowledge that the input length becomes exponentially large, but this is not an issue.
    When an instance has exponentially many removed elements, we can easily solve such an instance since the input length is sufficiently large.
    Therefore, in practice, we are only interested in instances that have at most polynomially many removed elements.
    Note that in the case where $m_1$ is relatively very small than $m_2$ such as $m_1 \le \log m_2$ or $m_1$ is a constant, the input length of the problem is polynomial in $m_2$ even if we are given $2^{m_1} - 1$ removed elements.
\end{remark}

We observe some basic computational complexity aspects of \CategorizedPigeon in \cref{Sec:BasicProperties_of_CPPP}.
In this section, we present some key properties of \CategorizedPigeon that we need to prove \cref{Lemma:main_containment}.
Unfortunately, it is unclear whether \CategorizedPigeon is a $\PPP$-complete problem. More precisely, we can straightforwardly see the $\PPP$-hardness of \CategorizedPigeon, but we are unaware of the $\PPP$ containment of this $\TFNP$ problem.
Nevertheless, it is $\PPP$-complete if an instance has exactly one removed element, i.e., $k = 1$.
The next lemma states that we have a polynomial-time reduction from \CategorizedPigeon with $k = 1$ to the $\PPP$-complete problem \Pigeon.

\begin{lemma} \label{Lemma:CategorizedOne-in-PPP}
    \CategorizedPigeon with exactly one removed element is in the class $\PPP$.
\end{lemma}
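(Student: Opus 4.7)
My plan is to give a polynomial-time reduction from $\CategorizedPigeon$ with $k = 1$ to the canonical $\PPP$-complete problem $\Pigeon$. Given an instance $\langle p, h, \Pi = \{\pi_1\}\rangle$ where $p : \{0,1\}^{m_1+m_2} \to \{0,1\}^{m_1}$ and $h : \{0,1\}^{m_1+m_2} \to \{0,1\}^{m_2}$, I first note that $1 = k < 2^{m_1}$ forces $m_1 \geq 1$, so some $u \in \{0,1\}^{m_1}$ with $u \neq p(\pi_1)$ can be computed explicitly (e.g.\ flip the last bit of $p(\pi_1)$). I would then construct a Boolean circuit $F: \{0,1\}^{m_1+m_2} \to \{0,1\}^{m_1+m_2}$ given by
\[ F(x) = \begin{cases} (p(x), h(x)) & \text{if } x \neq \pi_1, \\ (u, h(\pi_1)) & \text{if } x = \pi_1, \end{cases} \]
set the target to $v^* := (p(\pi_1), h(\pi_1))$, and output $\langle F, v^*\rangle$ as the $\Pigeon$ instance. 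The circuit $F$ is clearly polynomial-size given $p$, $h$, and $\pi_1$.

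The next step is the solution-translation map. If the $\Pigeon$ oracle returns a preimage $s$ with $F(s) = v^*$, then the choice $u \neq p(\pi_1)$ ensures $F(\pi_1) \neq v^*$, hence $s \neq \pi_1$, so $(p(s), h(s)) = (p(\pi_1), h(\pi_1))$ and in particular $h(s) = h(\pi_1) \in h(\Pi)$, giving a type-(1) $\CategorizedPigeon$ solution. If the oracle returns a collision $x \neq y$ with $F(x) = F(y)$, then either both elements lie outside $\Pi$, yielding $p(x) = p(y)$ and $h(x) = h(y)$ (a type-(2) solution directly), or one of them equals $\pi_1$ (say $x = \pi_1$), in which case $F(y) = (u, h(\pi_1))$ forces $h(y) = h(\pi_1) \in h(\Pi)$ with $y \neq \pi_1$, again a type-(1) solution.

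The delicate part of the argument is really the design of $F(\pi_1)$: its first coordinate must differ from $p(\pi_1)$ so that an adversarial $\Pigeon$ oracle cannot discharge its obligation by returning $\pi_1$ itself as a preimage of $v^*$, while its second coordinate must agree with $h(\pi_1)$ so that any collision involving $\pi_1$ still exhibits a nontrivial partner whose $h$-value lies in $h(\Pi)$. The condition $m_1 \geq 1$, which is implied by the $k = 1$ hypothesis, is precisely what allows both requirements to be met simultaneously; the rest of the proof is a routine case analysis of $\Pigeon$ solutions, which I expect to write up as a short verification following the reduction.
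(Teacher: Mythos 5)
Your reduction is correct and is essentially the same as the paper's: the paper also maps $x \mapsto p(x)h(x)$, sets $v^*$ to have second coordinate $h(\pi)$, and redirects $\pi$ to an image whose first coordinate differs from that of $v^*$ (the paper uses $0^{m_1}$ and $1^{m_1}$ where you use $p(\pi_1)$ and $u$), with the same case analysis of \Pigeon{} solutions. No substantive difference.
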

\begin{proof}
    Let $\langle p: \{ 0, 1 \}^{m_1 + m_2} \to \{ 0, 1 \}^{m_1}; h: \{ 0, 1 \}^{m_1 + m_2} \to \{ 0, 1 \}^{m_2}; \Pi := \{ \pi \} \rangle$ be an instance of \CategorizedPigeon with exactly one removed element.
    We now construct a \Pigeon instance.
    
    Let $n = m_1 + m_2$.
    First, we define the special element $v^*$ as $0^{m_1} \| h(\pi)$ in $\{ 0, 1 \}^{n}$.
    Next, we define the Boolean function $f: \{ 0, 1 \}^{n} \to \{ 0, 1 \}^{n}$ as follows: For each $x \in \{ 0, 1 \}^{n}$,
    
    \begin{align*}
        f(x) := \begin{cases}
            1^{m_1} \| h(\pi) & \text{ if } x = \pi \\
            p(x) \| h(x) & \text{ otherwise.}
        \end{cases}
    \end{align*}

    \noindent
    Thus, for every $x \neq \pi$, the first $m_1$ bits and the last $m_2$ bits of $f(x)$ are equal to $p(x)$ and $h(x)$, respectively.
    
    We have completed constructing the reduced instance of \Pigeon. It is not hard to see that such a reduction can be computed in polynomial time.
    What remains is to prove that we can effortlessly obtain a solution to the original instance of \CategorizedPigeon with exactly one removed element from every solution to the reduced \Pigeon instance.
    
    We first consider the case where we obtain an element $x \in \{ 0, 1 \}^{n}$ such that $f(x) = v^*$.
    By definition, we know that $x \neq \pi$ since $f(\pi) = 1^{m_1} \| h(\pi) \neq 0^{m_1} \| h(\pi) = v^*$.
    It holds that $f(x) = p(x) \| h(x) = 0^{m_1} \| h(\pi)$, which implies that $h(x) = h(\pi)$. Thus, the element $x$ is a solution to the original instance.
    
    Next, we consider the case where we obtain two distinct elements $x$ and $y$ in $\{ 0, 1 \}^{n}$ such that $f(x) = f(y)$.
    By definition, the last $m_2$ bits of $f(x)$ and $f(y)$ are the same. Specifically, it holds that $h(x) = h(y)$.
    There are two cases when $x = \pi$ or not. If $x = \pi \neq y$, then we have that $h(y) = h(\pi)$, which implies that the element $y$ is a first type of solution to the original \CategorizedPigeon instance.
    Otherwise, i.e., if $x \neq \pi$, then $x$ is a first type of solution to the original instance if $y = \pi$; otherwise, $x$ and $y$ are a second type of solution to the original \CategorizedPigeon instance.
\end{proof}

Thus, it suffices to construct a polynomial-time reduction from \BasicWardSzabo to \CategorizedPigeon with exactly one removed element for proving \cref{Lemma:main_containment}.
Our reduction shown in \cref{Proof:Basic-to-Categorized} deals with the more special case of \CategorizedPigeon, in which $m_1 = m_2$.
For the sake of simplicity of our reduction, we introduce an alternative definition of \CategorizedPigeon satisfying that $m_1 = m_2$ and $k = 1$.
We modify the first type of solution to the \CategorizedPigeon into an element $x$ such that $h(x) = p(x)$.
Hence, in our new variant, the first-type solution does not depend on the removed element. Note that we consider the case where $m_1 = m_2$. Thus, we can define such a solution.
The formal definition of such a variant of \CategorizedPigeon, called \AlternativeCategorizedPigeon, is as follows:

\begin{definition} \label{Def:AlternativeCategorziedPigeon}
	The problem \AlternativeCategorizedPigeon is defined as follows: 
	Given two Boolean circuits $p: \{ 0, 1 \}^{2n} \to \{ 0, 1 \}^{n}$ and $h: \{ 0, 1 \}^{2n} \to \{ 0, 1 \}^{n}$ and a removed element $\pi \in \{ 0, 1 \}^{n}$, find one of the following: 
	\begin{enumerate*}[label = (\arabic*)]
		\item an element $x \in \{ 0, 1 \}^{n} \setminus \{ \pi \}$ such that $h(x) = p(x)$;
		\item two distinct elements $x, y \in \{ 0, 1 \}^{n} \setminus \{ \pi \}$ such that $p(x) = p(y)$ and $h(x) = h(y)$.
	\end{enumerate*}
\end{definition}

For proving \cref{Lemma:main_containment}, it is sufficient to prove that we have a polynomial-time reduction from \AlternativeCategorizedPigeon to \CategorizedPigeon with exactly one removed element.
The following lemma shows how to construct such a reduction. 
In fact, these two variants of \CategorizedPigeon are $\PPP$-complete from the $\PPP$-hardness of \WardSzabo shown in \cref{Sec:Hardness}.

\begin{lemma} \label{Lemma:Alternative-to-CategorizedOne}
	\AlternativeCategorizedPigeon is polynomial-time reducible to \CategorizedPigeon with exactly one removed element.
\end{lemma}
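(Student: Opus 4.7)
The plan is to reduce \AlternativeCategorizedPigeon to \CategorizedPigeon with $k=1$ by keeping the same domain $\{0,1\}^{2n}$ with $m_1 = m_2 = n$, reusing $\pi$ as the sole removed element, and modifying only the hash function. The core idea is to XOR $p$ into $h$ on all non-$\pi$ inputs: this turns the Alternative first-type condition ``$h(x) = p(x)$'' into ``new hash equals $0^n$,'' and by forcing the value at $\pi$ to be $0^n$ we make that condition coincide with the Categorized first-type condition ``$h'(x) = h'(\pi)$.''

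Concretely, the input map $f$ sends $\langle p, h, \pi \rangle$ to $\langle p', h', \Pi := \{\pi\}\rangle$, where $p'(x) := p(x)$ and
\[
	h'(x) := \begin{cases} 0^n & \text{if } x = \pi, \\ h(x) \oplus p(x) & \text{if } x \neq \pi. \end{cases}
\]
This is clearly polynomial-time computable from the given circuits, and $k = 1 < 2^n$, so the reduced instance is well-formed. For the solution map $g$: given a Categorized first-type solution $x \neq \pi$ with $h'(x) \in h'(\Pi) = \{0^n\}$, the non-$\pi$ branch of $h'$ forces $h(x) \oplus p(x) = 0^n$, i.e., $h(x) = p(x)$, so $x$ is an Alternative first-type solution. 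Given a Categorized second-type solution $x \neq y$ in $\{0,1\}^{2n} \setminus \{\pi\}$ with $p'(x) = p'(y)$ and $h'(x) = h'(y)$, we get $p(x) = p(y)$ directly and $h(x) = h'(x) \oplus p(x) = h'(y) \oplus p(y) = h(y)$ by the XOR identity, so $(x, y)$ is an Alternative second-type solution.

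The only delicate point I expect is convincing oneself that the piecewise override of $h$ at $\pi$ does not leak spurious Categorized solutions without an Alternative counterpart; but this is automatic, since the Categorized definition forbids a solution from using the removed element $\pi$, and hence the uniform XOR formula applies to every component of any candidate solution. The remaining work is only a routine verification that $f$ and $g$ run in polynomial time.
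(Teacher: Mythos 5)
Your reduction is correct. You follow the same overall strategy as the paper -- keep the domain $\{0,1\}^{2n}$, reuse $p$ and $\pi$ unchanged, and modify only the hash function so that the \AlternativeCategorizedPigeon condition $h(x)=p(x)$ becomes the \CategorizedPigeon condition ``$h'(x)=h'(\pi)$'' -- but your implementation of that step is genuinely different and cleaner. The paper defines $\hat{h}$ by a three-way case split (sending $\pi$ and all $x$ with $h(x)=p(x)$ to $h(\pi)$, and swapping $p(x)$ in for any accidental collision with $h(\pi)$), which forces a somewhat delicate verification that no spurious first-type solutions arise and that second-type collisions in $\hat{h}$ pull back to collisions in $h$. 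Your uniform shift $h'(x):=h(x)\oplus p(x)$ (with $h'(\pi):=0^n$) makes both checks immediate: $h'(x)=0^n$ iff $h(x)=p(x)$ for $x\neq\pi$, and since the XOR offset $p(x)$ is constant on each $p$-group, collisions of $h'$ within a group coincide exactly with collisions of $h$. Your closing remark is also right: because the \CategorizedPigeon solution conditions explicitly exclude the removed element, the override at $\pi$ can never contaminate a returned solution. Nothing is missing.
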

\begin{proof}
	Let $\mathcal{I} := \langle p: \{ 0, 1 \}^{2n} \to \{ 0, 1 \}^{n}; h: \{ 0, 1 \}^{2n} \to \{ 0, 1 \}^{n}; \pi \rangle$ be an \AlternativeCategorizedPigeon instance.
	We now construct a \CategorizedPigeon with exactly one removed element. In our reduction, the removed element for the reduced instance is the same as the original one.
	Also, the function $p$ is the same.
	Hence, we only modify the function $h$ in our reduction.
	
	We define the fucntion $\hat{h}: \{ 0, 1 \}^{2n} \to \{ 0, 1 \}^{n}$ as follows: For every element $x \in \{ 0, 1 \}^{2n}$,
	
	\begin{align*}
		\hat{h}(x) := \begin{cases}
			h(\pi) & \text{ if } x = \pi \text{ or } h(x) = p(x), \\
			p(x) & \text{ if } h(\pi) = h(x) \neq p(x), \\
			h(x) & \text{ otherwise.}
		\end{cases}
	\end{align*}
	
	We have completed constructing the reduced instance $\mathcal{J} := \langle p: \{ 0, 1 \}^{2n} \to \{ 0, 1 \}^{n}; \hat{h}: \{ 0, 1 \}^{2n} \to \{ 0, 1 \}^{n}; \pi \rangle$ of \CategorizedPigeon with exactly one removed element.
	It is not hard to see that our reduction can be computed in polynomial time.
	What remains is to prove that we can efficiently obtain a solution to $\mathcal{I}$ from every solution to $\mathcal{J}$.
	
	We first consdier the case where we obtain an element $x$ in $\{ 0, 1 \}^{2n} \setminus \{ \pi \}$ such that $\hat{h}(x) = \hat{h}(\pi)$.
	By definition, it satisfies that $\hat{h}(x) = \hat{h}(\pi) = h(\pi)$ and $x \neq \pi$. Hence, $h(x) = p(x)$ holds. Therefore, the element $x$ is a solution to the original instance $\mathcal{I}$.
	
	Next, we consider the case where we obtain two distinct elements $x$ and $y$ in $\{ 0, 1 \}^{2n} \setminus \{ \pi \}$ such that $p(x) = p(y)$ and $\hat{h}(x) = \hat{h}(y)$.
	If at least one of $h(x) = p(x)$ and $h(y) = p(y)$ holds, then we immediately solution to the original instance $\mathcal{I}$. 
	In the rest of this proof, we suppose that $h(x) \neq p(x)$ and $h(y) \neq p(y)$ hold.
	Then, we have two cases: If $h(x) = h(\pi)$ or not.
	When $h(x) = h(\pi)$, it holds that $\hat{h}(x) = p(x) = p(y)$. From our assumption that $h(x) \neq p(y)$, it must satisfy that $h(x) = h(\pi)$. 
	Therefore, the pair of two distinct elements $x$ and $y$ in $\{ 0, 1 \}^{2n}$ is a solution to the original \AlternativeCategorizedPigeon instance $\mathcal{I}$ since $h(x) = h(\pi) = h(y)$.
	On the other hand, when $h(x) \neq h(\pi)$, we have that $\hat{h}(x) = h(x) \neq p(x) = p(y)$. 
	Hence, it also holds that $h(y) \neq h(\pi)$.
	The pair of two distinct elements $x$ and $y$ is also a solution to $\mathcal{I}$ since $h(x) = \hat{h}(x) = \hat{h}(x) = h(y)$.
\end{proof}

Now, we are ready to prove that \BasicWardSzabo belongs to the complexity class $\PPP$.
The next lemma states that we have a polynomial-time reduction from \BasicWardSzabo to \AlternativeCategorizedPigeon.

\begin{lemma} \label{Lemma:Basic-to-Categorized}
	\BasicWardSzabo is reducible to \AlternativeCategorizedPigeon in polynomial time.
\end{lemma}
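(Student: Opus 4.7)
The plan is to mirror the two-step pigeonhole argument of \cref{Prop:BasicProperty} by encoding the color class of $a$ via $p$ and the secondary pigeonhole on edges from $b$ or $c$ via $h$. Concretely, let $\chi_b := C(a,b)$ and $\chi_c := C(a,c)$, which are distinct by assumption. I would set $\pi := a$ and $p(x) := C(a,x)$ for every $x$, so that a group in the partition induced by $p$ corresponds to a vertex set $V_\chi = \{x : C(a,x) = \chi\}$ from the proof of \cref{Prop:BasicProperty}. Since $\chi_b \neq \chi_c$, any $V_\chi$ misses at least one of $b, c$, so I would define
\[
h(x) := \begin{cases} C(b, x) & \text{if } p(x) \neq \chi_b,\\ C(c, x) & \text{if } p(x) = \chi_b, \end{cases}
\]
which is polynomial-time computable from $C$.

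Next I would translate each kind of solution of the reduced \AlternativeCategorizedPigeon instance back to a \BasicWardSzabo solution. A first-type solution $x \neq a$ with $h(x) = p(x) = \chi$ directly yields a basic bichromatic triangle with apex $x$: when $\chi \neq \chi_b$, the identities $C(a,x) = C(b,x) = \chi \neq \chi_b = C(a,b)$ make $\{a,b,x\}$ bichromatic, and when $\chi = \chi_b$, the analogous identities using $c$ make $\{a,c,x\}$ bichromatic. For a second-type solution $x \neq y$ with $p(x) = p(y) = \chi$ and $h(x) = h(y) = \mu$, I would case-split on whether $C(x,y) = \chi$ and whether $\mu = \chi$: if $C(x,y) \neq \chi$ then $\{a,x,y\}$ is bichromatic with apex $a$; if $C(x,y) = \chi$ but $\mu \neq \chi$ then $\{b,x,y\}$ (or $\{c,x,y\}$ in the case $\chi = \chi_b$) is bichromatic with apex $b$ (or $c$); finally if $C(x,y) = \chi = \mu$ then $\{a,b,x\}$ (respectively $\{a,c,x\}$) is bichromatic with apex $x$, because $\chi \neq \chi_b$ (respectively $\chi_b \neq \chi_c$). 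Vertex distinctness follows from $x, y \neq a$ together with the values $p(b) = \chi_b$ and $p(c) = \chi_c$, which rule out $x, y \in \{b\}$ or $x, y \in \{c\}$ in the relevant subcases.

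Throughout the decoding, whenever I need a specific orientation of $C$ to realize the triangle in the directed form required by \cref{Def:BasicWardSzaboProblem}, I would first check the relevant identity $C(u,v) = C(v,u)$ in polynomial time; any failure is immediately returned as the symmetry-violation solution of \BasicWardSzabo. The main obstacle I anticipate is the careful bookkeeping in the second-type case when $\chi = \chi_b$, where $b$ itself can coincide with $x$ or $y$, so I must verify that the triangle chosen for output consists of three distinct vertices and stays basic. This is resolved by exploiting $c \notin V_{\chi_b}$, which keeps $c$ distinct from $x$ and $y$ and lets me pivot through $c$ whenever the triangle through $b$ would degenerate.
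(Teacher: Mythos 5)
Your reduction is identical to the paper's (remove $a$, let $p(x) = C(a,x)$, and let $h(x)$ be the color toward $b$ or $c$ depending on whether $C(a,x) = C(a,b)$), and your decoding, though organized by a slightly different case split (on $C(x,y)$ versus $\chi$ and $\mu$ rather than on $C(a,x)$ versus $C(b,x)$), carries out the same argument, including the same distinctness check via $p(b) = C(a,b)$ and $p(c) = C(a,c)$. The proposal is correct and takes essentially the same approach as the paper; the only cosmetic difference is that the paper enforces symmetry of $C$ up front via a preprocessing step rather than checking it during decoding.
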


Here, we sketch the outline of our reduction.
The removed element is set to be the node $a$ for which the colors of two edges $\{ a, b \}$ and $\{ a, c \}$ are different. Note that the existence of such a node is guaranteed by the assumption of the problem \WardSzabo.
The partition function $p(x)$ is simply defined as the color on the edge $\{ a, x \}$ for every node $x$.
On the other hand, the hole function $h(x)$ is either the edge color on the edge $\{ b, x \}$ or $\{ c, x \}$.
The key idea of our reduction is to formulate the problem of finding four distinct nodes $x$, $y$, $z$, and $a$ satisfying the following conditions:

\begin{enumerate}[label = (\roman*)]
	\item the colors on the edges $\{ a, x \}$ and $\{ a, y \}$ are the same;
	\item the colors on the edges $\{ z, x \}$ and $\{ z, y \}$ are the same; 
	\item the colors on the edges $\{ a, z \}$ and $\{ z, x \}$ are the different; and
	\item the node $z$ is either $b$ or $c$.
\end{enumerate}

\noindent
We can immediately obtain a basic bichromatic triangle from such four distinct nodes as follows:
If the colors on the edges $\{ a, x \}$ and $\{ z, x \}$ are also the same, then the three nodes $a$, $x$, and $z$ form a bichromatic triangle; otherwise, at least one of the triangles $\{ a, x, y \}$ and $\{ z, x, y \}$ is bichromatic because the color on the edge $\{ x, y \}$ is different from at least one of the edges $\{ a, x \}$ and $\{ z, x \}$.
The full proof of this lemma can be found in \cref{Proof:Basic-to-Categorized}.

\cref{Lemma:main_containment} immediately follows from the above \cref{Lemma:WarSzabo-to-BasicWardszabo,Lemma:CategorizedOne-in-PPP,Lemma:Alternative-to-CategorizedOne,Lemma:Basic-to-Categorized}.

\begin{proof}[Proof of \cref{Lemma:main_containment}]
	From \cref{Lemma:WarSzabo-to-BasicWardszabo,Lemma:CategorizedOne-in-PPP,Lemma:Alternative-to-CategorizedOne,Lemma:Basic-to-Categorized}, we have a polynomial-time reduction from \WardSzabo to \Pigeon. Thus, the problem \WardSzabo belongs to $\PPP$.
\end{proof}

\subsubsection{Proof of \cref{Lemma:Basic-to-Categorized}}
\label{Proof:Basic-to-Categorized}
This section will prove \cref{Lemma:Basic-to-Categorized}.
Before proving this lemma, we provide a helpful observation in which we can assume that the edge-coloring function is symmetric without loss of generality.
In other words, we have an efficient reduction from an instance of \WardSzabo to another instance satisfying the symmetry.

\begin{proposition} \label{Prop:Assumption_symmetry}
	Without loss of generality, we can assume that every \WardSzabo (and also \BasicWardSzabo) instance $\mathcal{I} := \langle C: \{ 0, 1 \}^{2n} \times \{ 0, 1 \}^{2n} \to \{ 0, 1 \}^{n}; \{ a, b, c \} \rangle$ satisfies the symmetry, i.e., $C(x, y) = C(y, x)$ for each pair of two nodes $x$ and $y$ in $\{ 0, 1 \}^{2n}$.
\end{proposition}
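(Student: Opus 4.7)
The plan is to exhibit a polynomial-time reduction that turns an arbitrary instance into one whose coloring is symmetric. Fix any efficiently computable total order $\preceq$ on $\{0, 1\}^{2n}$, for instance the lexicographic order, and define the symmetrized coloring by $\hat{C}(x, y) := C(x, y)$ if $x \preceq y$ and $\hat{C}(x, y) := C(y, x)$ otherwise. By construction $\hat{C}$ is symmetric and is computable by a circuit of size polynomial in that of $C$.

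The reduction first evaluates $\hat{C}(a, b)$ and $\hat{C}(a, c)$. In the typical case $\hat{C}(a, b) \neq \hat{C}(a, c)$, it outputs the instance $\langle \hat{C}; \{a, b, c\}\rangle$, which is a well-formed \WardSzabo (respectively, \BasicWardSzabo) instance. Since $\hat{C}$ is symmetric by construction, no symmetry-violation solution is possible, so any returned solution is a bichromatic triangle $\{x, y, z\}$ with $\hat{C}(x, y) = \hat{C}(x, z) \neq \hat{C}(y, z)$. If $C$ agrees with $\hat{C}$ on each of the three edges of this triangle, then $\{x, y, z\}$ is itself a bichromatic triangle of $C$; the property $\{x, y, z\} \cap \{a, b, c\} \neq \emptyset$ is preserved, so the basicness survives when the source problem is \BasicWardSzabo. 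Otherwise, some edge $\{u, v\}$ exhibits $C(u, v) \neq C(v, u)$, which I return as a symmetry-violation solution of the original instance.

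The remaining case, when $\hat{C}(a, b) = \hat{C}(a, c)$, is the one mildly delicate step: the promise of the reduced instance fails, so no well-formed reduction target exists. I plan to resolve this within the many-one framework of \cref{Sec:TFNP} by letting the reduction map $f$ output an arbitrary fixed valid dummy instance of the symmetric problem, while letting the solution-translation $g$ ignore the returned $y$ and read a solution straight off the original input $x$. The extraction is immediate: since $C(a, b) \neq C(a, c)$ while $\hat{C}(a, b) = \hat{C}(a, c)$, the colorings $C$ and $\hat{C}$ must disagree on at least one of the edges $(a, b)$ or $(a, c)$, and by definition of $\hat{C}$ any such disagreement is witnessed either by $C(a, b) \neq C(b, a)$ or by $C(a, c) \neq C(c, a)$; we test both and output whichever inequality holds. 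The main obstacle, therefore, is not the combinatorial content but this bookkeeping step of gracefully handling the case in which the symmetrization destroys the input promise.
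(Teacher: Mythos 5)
Your proof is correct and follows essentially the same route as the paper: symmetrize $C$ via a fixed total order on nodes, and translate any bichromatic triangle of the symmetrized coloring back either to a bichromatic triangle of $C$ or to a symmetry violation of $C$. Your explicit treatment of the degenerate case $\hat{C}(a,b) = \hat{C}(a,c)$ --- where the reduced instance's promise would fail but a symmetry violation of $C$ can be read off directly from the edges $(a,b)$ and $(a,c)$ --- is handled correctly and is in fact a detail the paper's own proof silently omits.
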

\begin{proof}
	We modify the edge-coloring function $C$ as follows: For each pair of two nodes $x$ and $y$ in $\{ 0, 1 \}^{2n}$, we define $C'(x, y) := C(x, y)$ if $x$ is smaller than $y$ in the lexicographical order.
	We now obtain another \WardSzabo instance $\mathcal{J} := \langle C': \{ 0, 1 \}^{2n} \times \{ 0, 1 \}^{2n} \to \{ 0, 1 \}^{n}; \{ a, b, c \} \rangle$ in which $C'$ satisfies the symmetry.
	We can straightforwardly see that such an instance can be constructed in polynomial time.
	To complete the proof, we need to show that we can effortlessly obtain a solution to $\mathcal{I}$ from every solution to $\mathcal{J}$.
	
	Since the edge-coloring function $C'$ is symmetric, every solution to $\mathcal{J}$ is a bichromatic triangle $\{ x, y, z \}$.
	We first check whether each pair of two nodes $\xi$ and $\eta$ in $\{ x, y, z \}$ satisfies that $C(\xi, \eta) = C(\eta, \xi)$. If we find a pair of $\xi$ and $\eta$ such that $C(\xi, \eta) \neq C(\eta, \xi)$, then we obtain a violation that $C$ is symmetric, which is a solution to $\mathcal{I}$.
	Otherwise, the triangle $\{ x, y, z \}$ is also a bichromatic triangle. This is also a solution to $\mathcal{I}$.
\end{proof}

We show a polynomial-time reduction from \BasicWardSzabo to \AlternativeCategorizedPigeon.
Let $\mathcal{I} := \langle C: \{ 0, 1 \}^{2n} \times \{ 0, 1 \}^{2n} \to \{ 0, 1 \}^{n}; \{ a, b, c \} \rangle$ be an instance of \WardSzabo.
By \cref{Prop:Assumption_symmetry}, we can assume that every \BasicWardSzabo instance satisfies the symmetry without loss of generality.
By the definition of the problem \WardSzabo, it satisfies that $C(a, b) \neq C(b, c)$.

Now, we construct an \AlternativeCategorizedPigeon instance $\mathcal{J} := \langle p: \{ 0, 1 \}^{2n} \to \{ 0, 1 \}^{n}; \{ 0, 1 \}^{2n} \to \{ 0, 1 \}^{n}; \pi \rangle$.
First, we set a removed element $\pi$ to be $a$. Thus, the node $a$ is the removed element.
Next, we define the partition function $p: \{ 0, 1 \}^{2n} \to \{ 0, 1 \}^{n}$ as follows: For each element $x$, $p(x) = C(a, x)$, that is, $p(x)$ is the color on the edge $\{ a, x \}$.
Finally, we define the hole function $h: \{ 0, 1 \}^{2n} \to \{ 0, 1 \}^{n}$ as follows: For each element $x \in \{ 0, 1 \}^{2n}$,

\begin{align*}
    h(x) := \begin{cases}
        C(b, x) & \text{if } C(a, x) \neq C(a, b), \\
        C(c, x) & \text{otherwise.}
    \end{cases}
\end{align*}

\noindent
Thus, if the colors on the edges$\{ a, b \}$ and $\{ a, x \}$ are different, $h(x)$ is the color on the edge $\{ b, x \}$, otherwise it is the color on the edge $\{ c, x \}$.

We have completed constructing the \AlternativeCategorizedPigeon instance $\mathcal{J}$.
It is not hard to see that our reduction is computable in polynomial time.
What remains is to show that we can effortlessly obtain a solution to the original instance $\mathcal{I}$ from every solution to the reduced instance $\mathcal{J}$.

We first consider the case where we obtain an element $x \in \{ 0, 1 \}^{2n} \setminus \{ a \}$ such that $h(x) = p(x)$.
There are two cases in which $C(a, x) \neq C(a, b)$ holds or not.
First, we suppose that $C(a, x) \neq C(a, b)$ holds.
In this case, it satisfies that $C(b, x) = h(x) = p(x) = C(a, x) \neq C(a, b)$. 
This implies that the three nodes $a$, $b$, and $x$ form a bichromatic triangle, which is a solution to the original \BasicWardSzabo instance.
Next, we suppose that $C(a, x) = C(a, b)$ holds.
In this case, it satisfies that $C(c, x) = h(x) = p(x) = C(a, x) = C(a, b)$. Recall that we assume that $C(a, b) \neq C(a, c)$. Hence, the three nodes $a$, $c$, and $x$ form a bichromatic triangle, which is also a solution to the original instance.

We consider the case where we obtain two distinct elements $x, y \in \{ 0, 1 \}^{2n} \setminus \{ a \}$ such that $p(x) = p(y)$ and $h(x) = h(y)$.
By definition, $p(x) = p(y)$ implies that $C(a, x) = C(a, y)$, i.e., the color on the edges $\{ a, x \}$ and $\{ a, y \}$ are the same.
There are two cases in which $C(a, x) \neq C(a, b)$ or not.
First, we suppose that $C(a, x) \neq C(a, b)$. In this case, we have that $C(b, x) = h(x) = h(y) = C(b, y)$ since $C(a, y) = C(a, x) \neq C(a, b)$.
Furthermore, it also holds that $x \neq b \neq y$.
If $C(a, x) = C(b, x)$, then the three nodes $a$, $b$, and $x$ form a bichromatic triangle since $C(a, b) \neq C(a, x) = C(b, x)$.
Otherwise, i.e., if $C(a, x) \neq C(b, x)$, at least one of two triangles $\{ a, x, y \}$ and $\{ b, x, y \}$ is bichromatic because it holds that at least one of $C(a, x) = C(a, y) \neq C(x, y)$ and $C(b, x) = C(b, y) \neq C(x, y)$.
Therefore, we obtain a solution to the original instance $\mathcal{I}$.
Next, we suppose that $C(a, x) = C(a, b)$. In this case, it satisfies that $x \neq c \neq y$ since $C(a, y) = C(a, x) = C(a, b) \neq C(a, c)$, where the final inequality follows from the assumption of the \BasicWardSzabo instance.
Since $h(x) = h(y)$ and $C(a, y) = C(a, x) = C(a, b)$, we have that $C(c, x) = C(c, y)$.
If $C(c, x) = C(a, x)$, the triangle $\{ a, c, x \}$ is bichromatic since $C(a, x) \neq C(a, c)$. Otherwise, it satisfies at least one of $C(x, y) \neq C(a, x) = C(a, y)$ and $C(x, y) \neq C(c, x) = C(c, y)$. Hence, at least one of the two triangles $\{ a, x, y \}$ and $\{ c, x, y \}$ is bichromatic.

\subsection{Ward-Szabo is PPP-hard}
\label{Sec:Hardness}
In this section, we prove \cref{Lemma:main_hardness}.
Our proof originally comes from \cite{Ghentiyala25}. One of the heuristic observations made in \cite{BFHRS23}, and later formalized in \cite{Li24}, is that $\PPP$ problems which correspond to bounds can only be total when the corresponding bounds are sufficiently tight \cite{Ghentiyala25}.
So that, we utilize the tightness of the Ward-Szab\'{o} theorem (see \cref{Theorem:Ward-Szabo}).
Furthermore, we also use the $\PPP$-completeness of $\Pigeon_{2^{n}}^{2^n+1}$ (see \cref{Prop:pigeon_equiv}) for proving the $\PPP$-hardness of \WardSzabo.

\citeauthor{WS94} \cite{WS94} have explicitly shown a swell-coloring on edges of the complete graph with $p^{2k}$ nodes with $p^k + 1$ colors, where $p$ is prime (cf. Theorem 3 in \cite{WS94}).
We first construct a swell coloring on the complete graph with $2^{2n}$ nodes with $2^{n}$ colors based on their construction.
We need to observe at least the following two properties:
\begin{enumerate*}[label = (\roman*)]
    \item The edge-coloring is computable in polynomial time; and
    \item we have three distinct nodes $a, b, c$ such that the edge colors on $\{ a, b \}$ and $\{ a, c \}$ are different.
\end{enumerate*}

\begin{lemma}
    \label{Lemma:swell-circuit}
    There exists an efficiently constructible circuit $T: \{ 0, 1\}^{2n} \times \{ 0, 1\}^{2n} \rightarrow [2^n+1]$ and efficiently computable, distinct $a, b, c \in \{ 0, 1\}^{2n}$ such that
    \begin{enumerate}[label = (\arabic*)]
        \item $T(a, b) \neq T(a, c)$;
        \item $T(x, y) = T(y, x)$ for all $x, y \in \{ 0, 1\}^{2n}$; and
        \item there do not exist distinct $u, v, w \in \{ 0, 1\}^{2n}$ such that $T(u, v) = T(v, w) \neq T(u, w)$.
    \end{enumerate}
\end{lemma}
\begin{proof}
    Let $\mathbb{F}_{2^n}$ denote the Galois field of order $2^n$.
    For each element $u$ in $\{0, 1 \}^{2n}$, we denote $u_1$ and $u_2$ by the first and last $n$ bits of $u$, respectively, i.e., $u = u_1 \| u_2$.
    We regard each $n$-bit string as a field element in $\mathbb{F}_{2^n}$ using a canonical map from $\{ 0, 1 \}^{n}$ to $\mathbb{F}_{2^n}$.

    We now define the circuit $T: \{ 0, 1 \}^{2n} \times \{ 0, 1 \}^{2n} \to [2^n + 1]$ as follows:
    For each pair of two nodes $x$ and $y$ in $\{ 0, 1 \}^{2n}$,

    \begin{align*}
        T(x, y) := \begin{cases}
            2^n + 1 & \text{ if } x_1 = y_1, \\
            (x_2 - y_2) (x_1 - y_1)^{-1} & \text{ otherwise.}
        \end{cases}
    \end{align*}

    \noindent
    The circuit $T$ can be computed in polynomial time since we have efficient Turing machines that do these operations over the Galois fields.
    By using elements $0$ and $1$ in $\mathbb{F}_{2^n}$, we set the three distinct nodes $a$, $b$, and $c$ in $\{ 0, 1 \}^{2n}$ to be $1 \| 0$, $0 \| 1$, and $1 \| 1$, respectively.

    To complete the proof, let us move on to show that the circuit $T$ and three nodes $a, b, c$ satisfy the desired conditions.

    First, we check that $T(a, b) \neq T(a, c)$.
    By definition $T(a, b) = 2^n - 1 \in \mathbb{F}_{2^n}$, but $T(a, c) = 2^n + 1 \notin \mathbb{F}_{2^n}$, and thus, $T(a, b) \neq T(a, c)$.

    It is not hard to see that $T$ is symmetric, i.e., $T(x, y) = T(y, x)$ for each pair of two nodes $x$ and $y$ in $\{ 0, 1 \}^{2n}$.

    Finally, we show that there is no bichromatic triangle.
    To show this, we prove that $T(x, y) = T(x, z)$ implies that $T(y, z)$ for all three distinct nodes $x$, $y$, and $z$ in $\{ 0, 1 \}^{2n}$.
    If $T(x, y) = T(x, z) = 2^n+1$, then $y_1 = x_1 = z_1$. Therefore, it must hold that $T(y, z) = 2^n+1$.
    Otherwise, it satisfies that $T(x, y) = (x_2 - y_2)(x_1 - y_1)^{-1} = (x_2 - z_2)(x_1 - z_1)^{-1} = T(x, z)$ from the definition.
    Let $M = (x_2 - y_2)(x_1 - y_1)^{-1}$. We have the following two equations
    \begin{enumerate*}[label = (\roman*)]
        \item $x_2 - y_2 = M(x_1 - y_1)$, and
        \item $x_2 - z_2 = M(x_1 - z_1)$.
    \end{enumerate*}
    Then, we consider the difference between these two equation. 
    It holds that $y_2 - z_2 = M(y_1 - z_1)$. 
    This implies that $T(y, z) = (y_2 - z_2)(y_1 - z_1)^{-1} = M = (x_2 - y_2)(x_1 - y_1)^{-1} = T(x, y)$.
\end{proof}

We are now ready to prove \cref{Lemma:main_hardness}.
We will construct a polynomial-time reduction from $\Pigeon_{2^n}^{2^{n}+1}$ to \WardSzabo.
The key of our reduction is to map a swell-coloring constructed in \cref{Lemma:swell-circuit} to an edge-coloring with $2^n$ colors by using the function $f: [2^n+1] \to [2^n]$ given as an input instance of $\Pigeon_{2^n}^{2^{n}+1}$.

\begin{proof}[Proof of \cref{Lemma:main_hardness}]
    We show a polynomial-time reduction from $\Pigeon_{2^{n}}^{2^{n}+1}$ to \WardSzabo. From \cref{Prop:pigeon_equiv}, $\Pigeon_{2^{n}}^{2^{n}+1}$ is $\PPP$-complete.

    Let $f:[2^{n}+1] \to [2^n]$ be a $\Pigeon_{2^{n}}^{2^{n}+1}$ instance.
    We now construct an edge-coloring $C: \{ 0, 1 \}^{2n} \times \{ 0, 1 \}^{2n} \to \{ 0, 1 \}^{n}$ on the complete graph with $2^{2n}$ nodes.
    Let $T: \{ 0, 1 \}^{2n} \times \{ 0, 1 \}^{2n} \to [2^n + 1]$ be an efficiently computable swell-coloring on $K_{2^{2n}}$ shown in \cref{Lemma:swell-circuit}.
    Then, we define $C(x, y) := f(T(x, y))$ for every pair of two nodes $x$ and $y$ in $\{ 0, 1 \}^{2n}$.
    The reduced \WardSzabo instance is $\mathcal{J} := \langle C: \{ 0, 1 \}^{2n} \times \{ 0, 1 \}^{2n} \to \{ 0, 1 \}^{n}; a, b, c \rangle$, where the three distinct nodes $a$, $b$, and $c$ are specified in \cref{Lemma:swell-circuit}.

    We have completed constructing the reduction from $\Pigeon_{2^{n}}^{2^{n}+1}$ to \WardSzabo.
    It is not hard to see that our reduction can be computed in polynomial time.
    What remains is to prove that we obtain a solution to the original $\Pigeon_{2^{n}}^{2^{n}+1}$ instance from every solution to the reduced \WardSzabo instance $\mathcal{J}$.

    By definition, the coloring function $C$ is symmetric.
    Therefore, every solution to $\mathcal{J}$ is three distinct nodes $x$, $y$, and $z$ such that $C(x, y) = C(x, z) \neq C(y, z)$.
    From \cref{Lemma:swell-circuit}, the colors of $T(x, y)$, $T(x, z)$, and $T(x, z)$ are either the same or different from each other.
    If these three colors are the same, then it must hold that $C(x, y) = C(x, z) = C(y, z)$. Hence, these three colors are different from each other.
    On the other hand, we have that $C(x, y) = f(T(x, y)) = f(T(x, z)) = C(x, z)$. Thus, the pair of two distinct elements $T(x, y)$ and $T(x, z)$ in $[2^n + 1]$ is a solution to $\Pigeon_{2^{n}}^{2^{n}+1}$.
\end{proof}

From the $\PPP$-hardness of \WardSzabo, the $\PPP$-completeness of the two $\PPP$ problems \CategorizedPigeon with exactly one removed element and \AlternativeCategorizedPigeon immediately follows.

\begin{corollary}
    The two problems \AlternativeCategorizedPigeon and \CategorizedPigeon are $\PPP$-complete.
\end{corollary}

\subsection{A Parameterized Generalization of \WardSzabo}
\citeauthor{Ghentiyala25} \cite{Ghentiyala25} has introduced a parameterized generalization of \WardSzabo.
For $2n \le t(n) \le \poly(n)$, the problem $t(n)$-\WardSzabo is defined as follows: Given a Boolean circuit $C: \{ 0,1 \}^{t(n)} \times \{ 0, 1 \}^{t(n)}$ and three distinct nodes $a, b, c \in \{ 0, 1 \}^{t(n)}$ such that $C(a, b) \neq C(a, c)$, find one of the following:
\begin{enumerate*}[label = (\arabic*)]
    \item three distinct nodes $x$, $y$, and $z$ in $\{ 0, 1 \}^{t(n)}$ such that $C(x, y) = C(x, z) \neq C(y, z)$; and
    \item two nodes $x$ and $y$ in $\{ 0, 1 \}^{t(n)}$ such that $C(x, y) \neq C(y, x)$.
\end{enumerate*}

From our observation shown in \cref{Sec:UpperBound,Sec:Hardness}, we can easily see that $t(n)$-\WardSzabo is $\PWPP$-complete for any $2n < t(n) \le \poly(n)$.
For proving the $\PWPP$ containment, we formulate a weaker variant of \AlternativeCategorizedPigeon, called \WeakAlternativeCategorizedPigeon.

\begin{definition}
    The problem \WeakAlternativeCategorizedPigeon is defined as follows:
    Given two Boolean circuits $p: \{ 0, 1 \}^{n} \to \{ 0, 1 \}^{m}$ and $h: \{ 0, 1 \}^{n} \to \{ 0, 1 \}^{m}$ and a removed element $\pi \in \{ 0, 1 \}^{n}$, where $n > 2m$, find one of the following:
    \begin{enumerate*}[label = (\arabic*)]
        \item an element $x \in \{ 0, 1 \}^{n} \setminus \{ \pi \}$ such that $h(x) = p(x)$; and
        \item two distinct elements $x, y \in \{ 0, 1 \}^{n}$ such that $p(x) = p(y)$ and $h(x) = h(y)$.
    \end{enumerate*}
\end{definition}

\begin{theorem}
    The problem \WeakAlternativeCategorizedPigeon belongs to $\PWPP$.
\end{theorem}
\begin{proof}
    Let $\langle p: \{ 0, 1 \}^{n} \to \{ 0, 1 \}^{m}; h: \{ 0, 1 \}^{n} \to \{ 0, 1 \}^{m}; \pi \rangle$ be an instance of \WeakAlternativeCategorizedPigeon.
    We now construct a \WeakPigeon instance $\langle f: \{ 0, 1 \}^{n} \to \{ 0, 1 \}^{n-1} \rangle$.

    Let $t := n - 2m$.
    We define the function $f$ as follows: For every element $x \in \{ 0, 1 \}^{n}$,
    
    \begin{align*}
        f(x) := \begin{cases}
            0^{n-1} & \text{ if }  x = \pi, \\
            0^{t-1} \| p(x) \| h(x) & \text{ if } x \neq \pi.
        \end{cases}
    \end{align*}

    \noindent
    We have completed constructing the reduced \WeakPigeon instance.
    It is not hard to see that our reduction is computable in polynomial time.
    What remains is to show that we can efficiently obtain a solution to the original \WeakAlternativeCategorizedPigeon instance from every solution to the reduced \WeakPigeon instance.

    Consider that we obtain two distinct elements $x$ and $y$ in $\{ 0, 1 \}^{n}$ such that $f(x) = f(y)$.
    If $x = \pi$, then it holds that $p(y) = 0^m = h(y)$ and $y \neq \pi$. Therefore, the element $y$ is a solution to the original \WeakAlternativeCategorizedPigeon instance.
    On the other hand, if $x \neq \pi$, the $x$ is also a solution to the original instance if $y = \pi$; otherwise, it satifies that $f(x) = 0^{t-1} \| p(x) \| h(x) = 0^{t - 1} \| p(y) \| h(y) = f(y)$, which implies that $p(x) = p(y)$ and $h(x) = h(y)$ hold. Hence, the pair of two distinct elements $x$ and $y$ is a solution to the original instance.
\end{proof}

We can apply the same technique shown in \cref{Proof:Basic-to-Categorized} for showing a polynomial-time reduction from $t(n)$-\WardSzabo to \WeakAlternativeCategorizedPigeon for any $2n < t(n) \le \poly(n)$.
Also, we can apply the same approach shown in \cref{Sec:Hardness} to prove the $\PWPP$-hardness of $t(n)$-\WardSzabo.
As a consequence, we obtain the $\PWPP$-completeness of a parameterized \WardSzabo.

\begin{corollary}
    For $2n < t(n) \le \poly(n)$, the problem $t(n)$-\WardSzabo is $\PWPP$-complete.
\end{corollary}

\subsection{Basic Properties of Categorized Pigeonhole Principle}
\label{Sec:BasicProperties_of_CPPP}
The aim of this section is to observe the basic properties of a new $\TFNP$ problem \CategorizedPigeon.
We first show that the problem \CategorizedPigeon is a $\TFNP$ problem.

\begin{theorem}
	\CategorizedPigeon belongs to the class $\TFNP$.
\end{theorem}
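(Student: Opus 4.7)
The plan is to verify the two defining properties of membership in $\TFNP$ separately: that the underlying relation lies in $\FNP$ (polynomial balance and polynomial-time verification), and that it is total (every instance has a solution).

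For the $\FNP$ part, I would observe that an instance has input length at least $m_1 + m_2 + k$, so any candidate solution---either one or two strings of length $m_1+m_2$---has polynomial size. To verify a type-(1) candidate $x$, I evaluate $h$ at $x$ and at each $\pi_i$ and compare the resulting $m_2$-bit strings; to verify a type-(2) candidate $(x,y)$, I evaluate $p$ and $h$ at $x$ and $y$, compare the outputs, and check that neither $x$ nor $y$ appears in the explicitly listed set $\Pi$. Both checks run in polynomial time.

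The substantive content is totality, which I would establish via the double-pigeonhole argument already foreshadowed in the remarks that immediately follow the definition of \CategorizedPigeon. Set $U := \{0,1\}^{m_1+m_2} \setminus \Pi$, so that $|U| = 2^{m_1+m_2} - k$, and let $M_v := p^{-1}(v) \cap U$ for each $v \in \{0,1\}^{m_1}$. If every fiber satisfied $|M_v| \le 2^{m_2}-1$, summation would give $|U| \le 2^{m_1}(2^{m_2}-1) = 2^{m_1+m_2} - 2^{m_1}$, contradicting the input promise $k < 2^{m_1}$. Hence some fiber $M_{v^\star}$ has at least $2^{m_2}$ elements. I would then restrict $h$ to this fiber: if some $x \in M_{v^\star}$ satisfies $h(x) \in h(\Pi)$, that $x$ is a type-(1) solution; otherwise $h$ maps $M_{v^\star}$ into $\{0,1\}^{m_2} \setminus h(\Pi)$, a set of cardinality at most $2^{m_2}-1$, and a second pigeonhole step produces distinct $x, y \in M_{v^\star}$ with $h(x) = h(y)$, which combined with $p(x) = p(y) = v^\star$ is a type-(2) solution.

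I do not foresee a real obstacle; the argument is largely bookkeeping. The only delicate point is keeping both pigeonhole inequalities strict in the right direction: the first step hinges essentially on the input promise $k < 2^{m_1}$, and the second step uses $|h(\Pi)| \ge 1$, which holds automatically whenever $\Pi$ is nonempty. It is worth flagging in the write-up that $\Pi = \emptyset$ is a degenerate case (no type-(1) solution is possible, and $(p,h)$ could in principle be a bijection on $\{0,1\}^{m_1+m_2}$), which is avoided by the standard convention that $k \ge 1$.
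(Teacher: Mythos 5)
Your proof is correct and follows essentially the same double-pigeonhole argument as the paper: a counting step locates a fiber of $p$ (restricted to the non-removed elements) with at least $2^{m_2}$ elements, and a second pigeonhole step on $h$ restricted to that fiber yields either a type-(1) or a type-(2) solution. If anything, your write-up is slightly more careful than the paper's, since you explicitly exclude $\Pi$ from the fibers before counting and verify that the colliding pair avoids $\Pi$, points the paper's proof leaves implicit.
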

\begin{proof}
	Let $\mathcal{I} := \langle p: \{ 0, 1 \}^{m_1 + m_2} \to \{ 0, 1 \}^{m_1}; h: \{ 0, 1 \}^{m_1 + m_2 } \to \{ 0, 1 \}^{m_2}; \Pi := \{ \pi_1, \dots, \pi_k \} \rangle$ be an instance of \CategorizedPigeon.
	We now show that there exists a solution to $\mathcal{I}$.
	
	For each $i$ in $\{ 0, 1 \}^{m_1}$, we denote by $M_{i}$ the set of all elements $x$ in $\{ 0, 1 \}^{m_1 + m_2}$ mapped $i$ by the function $p$. Thus, $M_{i} := \{ x \in \{ 0, 1 \}^{m_1 + m_2} : p(x) = i \}$.
	Note that $k < 2^{m_1}$.
	From the pigeonhole principle, we know that there is a set $M_{i^*}$ that has at least $2^{m_2}$ elements, i.e., $|M_{i^*}| \ge 2^{m_2}$.
	Then, we restrict the domain of the function $h$ to the set $M_{i^*}$. In other words, we consider the function $h_{i^*}: M_{i^*} \to \{ 0, 1 \}^{m_2}$ such that $h_{i^*}(x) = h(x)$ for every $x \in M_{i^*}$.
	Again, by the pigeonhole principle, we always have two distinct elements $x$ and $y$ in $M_{i^*}$ such that $h_{i^*}(x) = h_{i^*}(y)$ if $|M_{i^*}| > 2^{m_2}$. Such a pair of elements is a solution to $\mathcal{I}$ since $h_{i^*}(x) = h(x) = h(y) = h_{i^*}(y)$ and $M_{i^*} \subseteq \{ 0, 1 \}^{m_1 + m_2}$.
	
	Next, we consider the case where $|M_{i^*}| = 2^{m_2}$.
	If there is $a \in \{ 0, 1 \}^{m_2}$ such that $h_{i^*}(x) \neq a$ for every $x \in \{ 0, 1 \}^{m_2}$, then we have two distinct elements $x$ and $y$ in $M_{i^*}$ satisfying that $h_{i^*}(x) = h_{i^*}(y)$ from the pigeonhole principle.
	We also have a solution to the \CategorizedPigeon instance $\mathcal{I}$.
	Otherwise, i.e., if $h_{i^*}$ is bijective, there always exists an element $x \in \{ 0, 1 \}^{m_1 + m_2}$ such that the function $h_{i^*}$ maps $x$ to $h(\pi) \in \{ 0, 1 \}^{m_2}$. We have a solution to $\mathcal{I}$ since $h_{i^*}(x) = h(x) = h(\pi)$.
	
	From the above argument, every \CategorizedPigeon instance always has a solution.
	Moreover, for every candidate solution, its correctness is verifiable in polynomial time. 
	Therefore, the problem \CategorizedPigeon belongs to the complexity class $\TFNP$.
\end{proof}

We know the $\PPP$-completeness of \CategorizedPigeon with $k = 1$.
Let us move on to observe the computational complexity of \CategorizedPigeon with at least two removed elements.
In the remainder of this section, we assume that $k > 1$.
We first show that there is a polynomial-time reduction from \Pigeon to \CategorizedPigeon for any $k \in \poly(n)$, i.e., $k$ is at most polynomial in the number of inputs of the Boolean circuit $f$ given as input instance for \Pigeon.

\begin{theorem}
	\CategorizedPigeon is $\PPP$-hard.
\end{theorem}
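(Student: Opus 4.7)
The plan is to extend the $\PPP$-hardness of \CategorizedPigeon with exactly one removed element (established in the preceding theorem) to all $k \in \poly(n)$ with $k > 1$ via a ``ghost removed element'' construction. Concretely, I would reduce \CategorizedPigeon with $k = 1$ to \CategorizedPigeon with $k$ removed elements; composed with \cref{Lemma:Pigeon-to-Alternative,Lemma:Alternative-to-CategorizedOne}, this yields $\PPP$-hardness of \CategorizedPigeon.

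Let $\mathcal{I} := \langle p, h, \{\pi\}\rangle$ be an instance of \CategorizedPigeon with one removed element on $\{0,1\}^{m_1 + m_2}$, and let $k \in \poly(n)$. First ensure $2^{m_1} > k$: if not, prepend $d := \lceil \log_2(k+1) \rceil$ fresh input bits $u$ and replace $p, h$ by $\bar p(u, v) := u \cdot p(v)$ and $\bar h(u, v) := u \cdot h(v)$. Mirroring the padding bits into \emph{both} outputs matters: the prefix in $\bar h$ forces any $\bar h$-collision to agree on the padding bits, which prevents the spurious first-type solutions that naive prepending would create, while the prefix in $\bar p$ lifts the constraint to $k < 2^{m_1 + d}$. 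Next, pick $k - 1$ distinct elements $\pi_2, \ldots, \pi_k$ in $\{0,1\}^{d + m_1 + m_2} \setminus \{\pi_1\}$, where $\pi_1 := (0^d, \pi)$, chosen so that $(u, \pi) \in \Pi$ implies $u = 0^d$; set $\Pi := \{\pi_1, \ldots, \pi_k\}$ and define
\begin{equation*}
    h'(x) := \begin{cases} \bar h(\pi_1) & \text{if } x \in \Pi, \\ \bar h(x) & \text{otherwise.} \end{cases}
\end{equation*}
The reduced instance is $\mathcal{J} := \langle \bar p, h', \Pi \rangle$, constructible in polynomial time since $|\Pi|$ is polynomial; the critical property is $h'(\Pi) = \{\bar h(\pi_1)\}$, a singleton, so the ghosts add no new ``forbidden'' hole values.

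Solution recovery is then mechanical. A first-type solution $x = (u_x, v_x) \notin \Pi$ to $\mathcal{J}$ satisfies $\bar h(x) = \bar h(\pi_1) = 0^d \cdot h(\pi)$, so $u_x = 0^d$ and $h(v_x) = h(\pi)$; since $x \neq \pi_1$ we have $v_x \neq \pi$, making $v_x$ a first-type solution to $\mathcal{I}$. A second-type solution $(u_x, v_x), (u_y, v_y) \notin \Pi$ with matching $\bar p$- and $h'$-values forces $u_x = u_y$, $p(v_x) = p(v_y)$, $h(v_x) = h(v_y)$, and $v_x \neq v_y$; if both $v_x, v_y \neq \pi$ this is a second-type solution to $\mathcal{I}$, otherwise exactly one equals $\pi$ and we extract a first-type solution to $\mathcal{I}$ instead. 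The main obstacle is ensuring the padding step does not spawn ghost first-type solutions, which is handled by the prefix trick above; once that is in place the ghost-element stage is routine bookkeeping.
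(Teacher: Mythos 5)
Your overall route differs from the paper's: you bootstrap from the already-established $k=1$ case via a ``ghost removed element'' reduction, whereas the paper reduces \Pigeon directly to \CategorizedPigeon by setting $m_1 = k$, taking $\Pi := \{ (1^{i}0^{k-i}, 0^{n}) : i \in [k] \}$, $p(b,x) := b$, and letting $h$ equal $v^*$ on $\Pi$ and $f(x)$ elsewhere. The essential trick is the same in both arguments --- collapse all of $\Pi$ to a single hole value so that $h(\Pi)$ is a singleton and a first-type solution carries exactly the information it carries in the one-removed-element case --- so your ghost stage and its solution-recovery analysis are sound, and the composition with \cref{Lemma:Pigeon-to-Alternative,Lemma:Alternative-to-CategorizedOne} does yield $\PPP$-hardness.

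The genuine problem is the padding step. Setting $\bar p(u,v) := u \cdot p(v)$ and $\bar h(u,v) := u \cdot h(v)$ on the domain $\{0,1\}^{d+m_1+m_2}$ gives output lengths $d+m_1$ and $d+m_2$, which sum to $2d+m_1+m_2$ rather than to the input length $d+m_1+m_2$; the padded pair is therefore not a syntactically valid \CategorizedPigeon instance, and if one ignores the format and runs the totality argument anyway, the $2^{d+m_1}$ groups have average size only $2^{m_2}$ against a hole range of size $2^{d+m_2}$, so the second pigeonhole step fails. You cannot mirror the fresh bits into both outputs. Fortunately the step is dispensable: in your composition the $k=1$ instance produced by the two cited lemmata has $m_1 = n$, and $k \in \poly(n) < 2^{n}$ for all but finitely many $n$ (small $n$ can be handled by brute force). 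If you do want a padding gadget, put the fresh bits into $\bar p$ only, and eliminate the resulting spurious first-type solutions $(u,\pi)$ with $u \neq 0^{d}$ by placing all $2^{d} \le 2(k+1)$ elements of the form $(u,\pi)$ into $\Pi$ (they are polynomially many and all collapse to the same hole value), rather than by altering $\bar h$'s output length.
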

\begin{proof}
	We show a polynomial-time reduction from \Pigeon to \CategorizedPigeon.
	Let $\langle f: \{ 0, 1 \}^{n} \to \{ 0, 1 \}^{n}; v^* \rangle$ be an instance of \Pigeon.
	
	Let $k \in \poly(n)$ be any positive integer.
	Let $m_1 = k$ and $m_2 = n$.
	We define the set of removed elements $\Pi$ as follows: $\Pi := \{ (1^{i}0^{k-i}, 0^{n}) : \forall i \in [k] \}$.
	Finally, we define the two Boolean circuits $p: \{ 0, 1 \}^k \times \{ 0, 1 \}^{n} \to \{ 0, 1 \}^{k}$ and $h: \{ 0, 1 \}^{k} \times \{ 0, 1 \}^{n} \to \{ 0, 1 \}^{n}$ as follows: For every $(b, x) \in \{ 0, 1 \}^k \times \{ 0, 1 \}^{n}$, $p(b, x) := b$ and 
	\begin{align*}
		h(b, x) := \begin{cases}
			v^{*} & \text{ if } (b, x) \in \Pi, \\
			f(x) & \text{ otherwise.}
		\end{cases}
	\end{align*}
	Thus, the function $p$ simply outputs the first $k$ bits of the input.
	On the other hand, the function $h$ outputs the special element $v^*$ if the input is a removed element; otherwise, $h$ simulates the function $f$ using the last $n$ bits of the input.

	We have completed constructing the \CategorizedPigeon instance.
	It is not hard to see that our reduction is polynomial-time computable.
	What remains is to prove that we can effortlessly obtain a solution to the original \Pigeon instance from every solution to the reduced \CategorizedPigeon instance.
	
	First, we consider the case where we obtain an element $(b, x) \in \{ 0, 1 \}^{k} \times \{ 0, 1 \}^{n} \setminus \Pi$ such that $h(b, x) \in h(\Pi)$.
	By definition, it holds that $f(x) = h(b, x) = h(1^k, 0^n) = v^*$.
	Thus, the element $x$ in $\{ 0, 1 \}^{n}$ is a solution to the original \Pigeon instance.
	
	Next, we consider the case where we obtain two distinct elements $(b, x), (c, y) \in \{ 0, 1 \}^k \times \{ 0, 1 \}^{n} \setminus \Pi$ such that $p(b, x) = p(c, y)$ and $h(b, x) = h(c, y)$.
	In this case, it holds that $b = c$, and $x \neq y$.
	By definition, it satisfies that $f(x) = h(b, x) = h(c, y) = f(x)$.
	Therefore, the pair of two distinct elements $x$ and $y$ in $\{ 0, 1 \}^{n}$ is a solution to the original \Pigeon instance.
\end{proof}

\noindent
It is unclear whether we have a polynomial-time reduction from \CategorizedPigeon to \Pigeon when $k > 1$ because we are unaware of how to avoid bad solutions in which a pair of two elements in $\Pi$ is a collision type of solution to a reduced \Pigeon instance.
In particular, when we have two distinct removed elements $\pi$ and $\pi'$ in $\Pi$ such that $h(\pi) = h(\pi')$, it becomes unclear whether \CategorizedPigeon belongs to the class $\PPP$.

There is an idea that we introduce a restricted variant of \CategorizedPigeon in which for each pair of removed elements $\pi$ and $\pi'$, it holds that $h(\pi) \neq h(\pi')$.
We call such a variant \InjectiveCategorizedPigeon (see \cref{Def:InjectiveCategorziedPigeon}).
\cref{Theorem:Injective-in-PPP} states that the problem \InjectiveCategorizedPigeon is in the complexity class $\PPP$.

\begin{definition} \label{Def:InjectiveCategorziedPigeon}
	The problem \InjectiveCategorizedPigeon is defined as follows: 
	Given two Boolean circuits $p: \{ 0, 1 \}^{m_1 + m_2} \to \{ 0, 1 \}^{m_1}$ and $h: \{ 0, 1 \}^{m_1 + m_2} \to \{ 0, 1 \}^{m_2}$ and a list of $k$ removed elements $\Pi := \{ \pi_1, \pi_2, \dots, \pi_k \} \subseteq \{ 0, 1 \}^{m_1 + m_2}$ such that $h(\pi_i) \neq h(\pi_j)$ for each pair of distinct two removed elements $\pi_i, \pi_j \in \Pi$, where $k < 2^{m_1}$, find one of the following: 
	\begin{enumerate*}[label = (\arabic*)]
		\item an element $x \in \{ 0, 1 \}^{n} \setminus \{ \pi \}$ such that $h(x) \in h(\Pi)$;
		\item two distinct elements $x, y \in \{ 0, 1 \}^{n} \setminus \Pi$ such that $p(x) = p(y)$ and $h(x) = h(y)$.
	\end{enumerate*}
\end{definition}

\begin{theorem} \label{Theorem:Injective-in-PPP}
	\InjectiveCategorizedPigeon belongs to $\PPP$.
\end{theorem}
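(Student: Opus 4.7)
The plan is to generalize the reduction from \cref{Lemma:CategorizedOne-in-PPP}, which handled a single removed element, to an arbitrary (polynomially many) list of removed elements, exploiting the new hypothesis that $h$ is injective on $\Pi$ to rule out the otherwise-problematic collisions among images of removed elements. Given an instance $\langle p, h, \Pi = \{\pi_1, \dots, \pi_k\}\rangle$ of \InjectiveCategorizedPigeon, I would construct a \Pigeon instance on $n := m_1 + m_2$ bits by setting
\[
    v^* := 0^{m_1} h(\pi_1), \qquad f(x) := \begin{cases} 1^{m_1}\, h(x) & \text{if } x \in \Pi, \\ p(x)\, h(x) & \text{otherwise.} \end{cases}
\]
Since $\Pi$ is given explicitly as part of the input, membership in $\Pi$ is polynomial-time decidable, so the Boolean circuit computing $f$ has polynomial size.

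The correctness analysis I would then carry out mirrors the one in \cref{Lemma:CategorizedOne-in-PPP}, with one extra subcase. First, any $x$ with $f(x) = v^*$ must lie outside $\Pi$: for $x = \pi_i$ the leading $m_1$ bits of $f(\pi_i)$ are $1^{m_1}$, which differ from the leading $m_1$ bits $0^{m_1}$ of $v^*$. Hence $p(x)\, h(x) = 0^{m_1} h(\pi_1)$, so $h(x) = h(\pi_1) \in h(\Pi)$, yielding a type-(1) solution. Second, a collision $f(x) = f(y)$ with $x \neq y$ is handled by three subcases. If both $x, y \notin \Pi$, matching the two halves of $f$ gives $p(x) = p(y)$ and $h(x) = h(y)$, a type-(2) solution. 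If exactly one of them, say $x = \pi_i$, lies in $\Pi$, then $p(y) = 1^{m_1}$ and $h(y) = h(\pi_i) \in h(\Pi)$ yields a type-(1) solution for $y$. The remaining subcase $x = \pi_i$ and $y = \pi_j$ with $i \neq j$ would force $h(\pi_i) = h(\pi_j)$, which is exactly ruled out by the injectivity hypothesis on $\Pi$.

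I do not expect a serious obstacle: the only step that requires attention is the third collision subcase above, and this is precisely the bad case flagged in the paragraph immediately preceding the theorem's statement. The injectivity assumption on $h$ restricted to $\Pi$ is inserted for exactly this reason, and once it is in force the case analysis becomes a direct extension of the argument already used for $k = 1$. Composing this polynomial-time reduction with the definition of $\PPP$ then yields the claimed containment.
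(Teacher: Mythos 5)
Your proposal is correct and follows essentially the same approach as the paper: concatenate $p(x)$ and $h(x)$ into a single \Pigeon circuit, aim $v^*$ at $h(\pi_1)$, and use the injectivity of $h$ on $\Pi$ to rule out collisions lying entirely inside $\Pi$. The only (cosmetic) difference is how preimages of $v^*$ inside $\Pi$ are avoided: you override the $m_1$-bit prefix to $1^{m_1}$ on $\Pi$, whereas the paper keeps $f(x) = p(x)h(x)$ everywhere and instead chooses the prefix of $v^*$ to differ from $p(\pi_1)$.
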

\begin{proof}
	Let $\mathcal{I} := \langle p: \{ 0, 1 \}^{m_1 + m_2} \to \{ 0, 1 \}^{m_1}; h: \{ 0, 1 \}^{m_1 + m_2} \to \{ 0, 1 \}^{m_2}; \Pi:= \{ \pi_1, \dots, \pi_k \} \rangle$ be an instance of \InjectiveCategorizedPigeon.
	Let $n := m_1 + m_2$
	We now construct a \Pigeon instance $\mathcal{J} := \langle f: \{ 0, 1 \}^n \to \{ 0, 1 \}^n; v^* \rangle$ from the instance $\mathcal{I}$.
	
	We set the special element $v^*$ to be $0^{m_1} \| h(\pi_1)$ in $\{ 0, 1 \}^{n}$ if $p(\pi_1) \neq 0^{m_1}$; otherwise, to be $1^{m_1} \| h(\pi_1)$.
	We define the Boolean circuit $f: \{ 0, 1 \}^{n} \to \{ 0, 1 \}^{n}$ as follows: For every $x \in \{ 0, 1 \}^{n}$, $f(x) := p(x) \| h(x)$. Thus, for each input string $x$ in $\{ 0, 1 \}^n$, the first $m_1$-bits and the last $m_2$-bits of $f(x)$ are $p(x)$ and $h(x)$, respectively.
	We have completed constructing the reduced instance $\mathcal{J}$.
	It is not hard to see that our reduction is polynomial-time computable.
	What remains is to show that we efficiently obtain a solution to the original \InjectiveCategorizedPigeon instance $\mathcal{I}$ from every solution to the reduced \Pigeon instance $\mathcal{J}$.
	
	We first consider the case where we obtain an element $x$ in $\{ 0, 1 \}^n$ such that $f(x) = v^*$.
	Since for every $i > 1$, $h(\pi_i) \neq h(\pi_1)$, and since $f(\pi_1) \neq v^*$, the element $x$ is not in $\Pi$.
	By definition, the last $m_2$-bits of $f(x)$ and $v^*$ are $h(x)$ and $h(\pi_1)$, respectively. Hence, the element $x$ is a solution to the original \InjectiveCategorizedPigeon instance $\mathcal{I}$ since $h(x) = h(\pi_1)$.
	
	Next, we consider the case where we obtain two distinct elements $x$ and $y$ in $\{ 0, 1 \}^{n}$ such that $f(x) = f(y)$.
	By defintion, it holds that $p(x) = p(y)$ and $h(x) = h(y)$.
	Since for every pair of two disctinct removed elements $\pi_i$ and $\pi_j$ in $\Pi$,  $h(\pi_i) \neq h(\pi_j)$, at least one of $x$ and $y$ is not in $\Pi$. Without loss of generality, we can assume that $x \notin \Pi$.
	If $y \in \Pi$, the element $x$ is a solution to $\mathcal{I}$ since $h(x) \in h(\Pi)$.
	Otherwise, the pair of two distinct elements $x$ and $y$ is a solution to $\mathcal{I}$.
\end{proof}

Unfortunately, this time, it is unclear whether we will be able to prove the $\PPP$-hardness of \InjectiveCategorizedPigeon.
Recall the $\PPP$-hardness proof of \CategorizedPigeon, the function $h$ is defined so that every element in $\Pi$ is mapped to the same element. This formulation violates the assumption for an instance of \InjectiveCategorizedPigeon.
Hence, we need a more clever idea to show the $\PPP$-completeness of \CategorizedPigeon and \InjectiveCategorizedPigeon.

To conclude this paper, we provide a complexity lower bound for \InjectiveCategorizedPigeon.
Specifically, we prove the $\PPADS$-hardness of this problem; the complexity class $\PPADS$ is one of the subclasses of $\PPP$ \cite{Pap94}.
Originally, the complexity class $\PPADS$ \cite{Pap94} is defined as the set of all search problems that are reducible to \SinkofLine in polynomial time.

\begin{definition}
	The problem \SinkofLine is defined as follows: Given two Boolean circuits $S, P: \{ 0, 1 \}^n \to \{ 0, 1 \}^n$ and a known source $s \in \{ 0, 1 \}^{n}$ such that $P(s) = s \neq S(s)$, find a sink $x$ in $\{ 0, 1 \}^{n}$ such that $P(S(x)) \neq x$.
\end{definition}

For every \SinkofLine instance, we consider the digraph with $2^n$ nodes whose nodes have at most one in-degree and out-degree.
For every node, the functions $S$ and $P$ represent a successor and a predecessor, respectively.
Thus, there is a directed edge from $u$ to $v$ if $S(u) = v$ and $P(v) = u$.
Since we know the existence of a source node, there must exist a sink node. The computational task of \SinkofLine is to seek a sink node. Note that a solution does not necessary have to be reachable from the known source node.

For proving the $\PPADS$-hardness of \InjectiveCategorizedPigeon, we use another $\PPADS$-complete problem, a multi-source variant of \SinkofLine \cite{HG18}.

\begin{definition}
	The problem \MultiSourceSinkofLine is defined as follows: Given two Boolean circuits $S, P: \{ 0, 1 \}^{n} \to \{ 0, 1 \}^{n}$ and a list of known sources $\Sigma:= \{ s_1, s_2, \dots, s_k \} \subseteq \{ 0, 1 \}^{n}$ such that $P(s) = s \neq S(s)$ for each $s$ in $\Sigma$, where $1 \le k \le \poly(n)$, find a sink $x$ in $\{ 0, 1 \}^{n}$ such that $P(S(x)) \neq x$.
\end{definition}

\begin{theorem}[\citeauthor{HG18} \cite{HG18}]
	The problem \MultiSourceSinkofLine is $\PPADS$-complete.
\end{theorem}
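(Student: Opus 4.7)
The statement bundles two directions. The $\PPADS$-hardness of \MultiSourceSinkofLine is immediate, since a \SinkofLine instance is literally a \MultiSourceSinkofLine instance with $|\Sigma|=1$ and the solution sets coincide. So the real content is $\PPADS$-membership, i.e., a polynomial-time reduction from \MultiSourceSinkofLine to \SinkofLine. My plan is to avoid any heavy path-concatenation or chaining construction and instead perform very local surgery on the given $S$ and $P$.

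Given an instance $\langle S, P, \Sigma = \{s_1, \ldots, s_k\}\rangle$, I would first run a one-pass pre-processing step that, for each $s_j \in \Sigma$, checks whether $P(S(s_j)) \neq s_j$, i.e., whether $s_j$ is itself already a sink of the original instance (which happens exactly when $S(s_j)$ is another source); if so, output $s_j$ and halt. This costs $O(k \cdot \poly(n))$ time, polynomial because $k \in \poly(n)$, and afterwards every source has a genuine outgoing edge into a non-source node. I would then build the single-source instance $\langle S', P', s_1\rangle$ by keeping $S' := S$ unchanged and modifying $P$ only on the sources: set $P'(s_j) := s_{j-1}$ for each $j \in \{2, \ldots, k\}$ and $P'(x) := P(x)$ otherwise. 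Since $P'(s_1) = s_1$ but $P'(s_j) \neq s_j$ for $j \geq 2$, the node $s_1$ becomes the unique source of $(S', P')$. A short case analysis splitting on whether $S(x) \in \{s_2, \ldots, s_k\}$ then shows that every sink of $(S', P')$---any $x$ with $P'(S'(x)) \neq x$---is already a sink of $(S, P)$, and conversely the $k$ original path-endpoints remain sinks in the new instance, so any new-instance sink is mapped back by the identity.

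The \textbf{main obstacle} I anticipated while planning is precisely the interaction between the rewiring $P'(s_j) := s_{j-1}$ and degeneracies of the input---in particular the configuration $x = s_{j-1}$ with $S(s_{j-1}) = s_j$, which simultaneously makes $s_{j-1}$ a sink of the original instance but would silently erase that status under the naive rewiring, potentially producing a new-instance sink that decodes to nothing useful. The pre-processing step dispatches this in one pass by outputting any such doubly-labeled source outright, after which the correctness of the reduction reduces to routine polynomial-time bookkeeping for $S', P'$ and the sink-preservation case analysis sketched above.
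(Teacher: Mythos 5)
The paper does not actually prove this statement; it imports it from \cite{HG18} as a black box, so your argument is supplying a proof rather than competing with one, and it is essentially correct. The hardness direction is indeed trivial ($k=1$). For membership, your construction is sound: the reduced instance is a legitimate \SinkofLine instance because $P'(s_1)=P(s_1)=s_1\neq S(s_1)=S'(s_1)$ (the sources being distinct, $s_1$ is not rewired), and the case analysis on $S(x)$ goes through --- if $S(x)\notin\{s_2,\dots,s_k\}$ then $P'(S(x))=P(S(x))$, while if $S(x)=s_j$ with $j\ge 2$ then $x\neq s_j$ (else $S(s_j)=s_j$, contradicting the source promise) and $P(S(x))=P(s_j)=s_j\neq x$, so such an $x$ is a sink of the original instance unconditionally. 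Two small remarks. First, the parenthetical ``which happens exactly when $S(s_j)$ is another source'' overstates things: $P(S(s_j))\neq s_j$ can also occur when $S(s_j)$ is a non-source whose predecessor pointer simply fails to return to $s_j$; since your pre-processing tests the condition $P(S(s_j))\neq s_j$ directly, nothing breaks, but the characterization is only a sufficient condition. Second, the pre-processing is not actually needed for soundness: rewiring $P'(s_j):=s_{j-1}$ only affects the sink status of nodes $x$ with $S(x)=s_j$, all of which were sinks of the original instance, and it can only \emph{remove} the single node $s_{j-1}$ from that set; hence the sinks of $(S',P')$ form a subset of the sinks of $(S,P)$, this subset is nonempty by the totality of \SinkofLine, and the identity map pulls any such sink back. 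Keeping the pre-processing is harmless and arguably clarifies why no original solution is silently lost, but the reduction is already correct without it.
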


\begin{theorem}
	\InjectiveCategorizedPigeon is $\PPADS$-hard.
\end{theorem}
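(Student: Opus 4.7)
The plan is to reduce \MultiSourceSinkofLine to \InjectiveCategorizedPigeon, leveraging the $\PPADS$-completeness of \MultiSourceSinkofLine established by \citeauthor{HG18} \cite{HG18}. Given a \MultiSourceSinkofLine instance $\langle S, P: \{0,1\}^{n} \to \{0,1\}^{n}; \Sigma = \{s_1, \dots, s_k\} \rangle$, I set $m_1 := \lceil \log_2(k+1) \rceil$ and $m_2 := n$. Since $k \in \poly(n)$, we have $m_1 = O(\log n)$, so the reduced instance has polynomial size. Choose distinct $m_1$-bit strings $b_1, \dots, b_k$ (e.g., the binary encodings of $1, \dots, k$), and take the removed list to be $\Pi := \{(b_1, s_1), \dots, (b_k, s_k)\}$. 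Define
\begin{align*}
    p(b, x) &:= b, & h(b, x) &:= \begin{cases} s_i & \text{if } (b, x) = (b_i, s_i) \in \Pi, \\ S(x) & \text{otherwise.} \end{cases}
\end{align*}
Since the sources are pairwise distinct, $h$ restricted to $\Pi$ is injective with $h(\Pi) = \{s_1, \dots, s_k\}$, and $k < 2^{m_1}$, so the reduced instance is a valid \InjectiveCategorizedPigeon instance, computable in polynomial time.

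For the solution extraction I will perform the following case analysis. A Case 1 solution is an $(b, x) \notin \Pi$ with $h(b, x) = S(x) = s_j$ for some $j$; I claim that such an $x$ is a sink. Indeed, the source condition forces $S(s_j) \neq s_j$, so $x \neq s_j$, and hence $P(S(x)) = P(s_j) = s_j \neq x$. A Case 2 solution is a pair of distinct elements $(b, x_1), (b, x_2) \notin \Pi$ with equal $p$-values (trivially) and equal $h$-values, i.e., $x_1 \neq x_2$ and $S(x_1) = S(x_2) =: v$; since $P(v)$ can equal at most one of $x_1, x_2$, at least one of them satisfies $P(S(\cdot)) \neq \cdot$ and is therefore a sink, which can be identified in polynomial time by a single evaluation of $P$.

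The main subtlety of the construction is the injectivity requirement on $h|_\Pi$, which is precisely what distinguishes \InjectiveCategorizedPigeon from the general \CategorizedPigeon variant and blocks the more naive reductions used earlier in the paper. A direct choice such as $h(b_i, s_i) := S(s_i)$ would be insufficient in general, since different sources may share an $S$-image. My workaround is to use the $x$-coordinate of each removed pair (namely $h(b_i, s_i) = s_i$), which is automatically injective since the $s_i$ are distinct, while all non-removed elements continue to be governed by the successor $S$, preserving the sink-extraction correspondence. After this design choice is made, the remaining verification reduces to the case analysis above and should be routine.
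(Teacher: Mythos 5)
Your reduction is correct and essentially the same as the paper's: both reduce from \MultiSourceSinkofLine, tag each source $s_i$ with a distinct prefix to form the removed list, set $h$ on the removed pair $(b_i, s_i)$ to $s_i$ itself (so $h$ restricted to $\Pi$ is injective because the sources are distinct), and let $h$ follow the successor $S$ elsewhere, extracting a sink from either a hit into $h(\Pi)$ or a collision. Your version is marginally cleaner---the paper's $h$ has an extra branch sending invalid (sink) nodes to $s_1$, whereas you absorb that case directly into the extraction argument---but the construction and case analysis are the same.
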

\begin{proof}
	We will show a polynomial-time reduction from the $\PPADS$-hard problem \MultiSourceSinkofLine to \InjectiveCategorizedPigeon.
	Let $\mathcal{I} := \langle S, P: \{ 0, 1 \}^{n} \to \{ 0, 1 \}^{n}; \Sigma := \{ s_1, s_2, \dots, s_k \} \rangle$ be an instance of \MultiSourceSinkofLine.
	For every node $x$ in $\{ 0, 1 \}^n$, we say that $x$ is \textit{valid} if $S(x) = x$ or $S(x) \neq x$ and $P(S(x)) = x$, and we say that $x$ is \textit{invalid} if $x$ is not valid. Note that every invalid node is a solution to \SinkofLine instance.
	Without loss of generality, we can assume that every known source is valid since we are interested in the non-trivial problem. If we have a known source $s$ that is invalid, then we immediately obtain a solution to the \MultiSourceSinkofLine instance since $P(S(s)) \neq s$.
	
	Let $m_1 = k$ and $m_2 = n$.
	We now construct an instance of \InjectiveCategorizedPigeon.
	We first define the set of removed elements $\Pi$ as follows: $\Pi := \{ (1^{i}0^{k-i}, s_i) : \forall i \in [k], s_i \in \Sigma \}$.
	We define the functions $p: \{ 0, 1 \}^k \times \{ 0, 1 \}^{n} \to \{ 0, 1 \}^k$ and $h: \{ 0, 1 \}^k \times \{ 0, 1 \}^{n} \to \{ 0, 1 \}^{n}$ as follows: For every $(b, x) \in \{ 0, 1 \}^k \times \{ 0, 1 \}^n$, $p(b, x) := b$ and
	\begin{align*}
		h(b, x) :=  \begin{cases}
				x & \text{ if } (b, x) \in \Pi, \\
				S(x) & \text{ if } (b, x) \notin \Pi \text{ and $x$ is valid}, \\
				s_1 & \text{ otherwise.}  		
  		\end{cases}
	\end{align*}	
	The function $p$ always outputs the first $k$ bits of the input.
	On the other hand, the function $h$ basically simulates the successor $S$ when the input is not a removed element, and the last $n$ bits represent a valid node.
	Otherwise, the function $h$ returns some known source.
	
	We have completed constructing the reduced \InjectiveCategorizedPigeon instance.
	It is not hard to see that our reduction is polynomial-time computable.
	What remains is to show that we can effortlessly obtain a solution to the original \MultiSourceSinkofLine instance $\mathcal{I}$ from every solution to the reduced \InjectiveCategorizedPigeon instance.
	
	First, we consider the case where we obtain an element $(b, x)$ in $\{ 0, 1 \}^k \times \{ 0, 1 \}^{n} \setminus \Pi$ such that $h(b, x) \in h(\Pi)$.
	By definition, it holds that $h(b, x) \in \Sigma$.
	Assuming that $v$ is valid, it holds that $h(b, x) = S(x) \neq x$ since $s \neq S(s) \notin \Sigma$ for every known source $s$ in $\Sigma$.
	Moreover, it satisfies that $P(s) = s$ for each $s \in \Sigma$. This is a contradiction that $x$ is valid.
	Therefore, $x$ is invalid. Since $S(x) \neq x$ and $P(S(x)) \neq x$ hold, the node $x$ is a solution to the original instance $\mathcal{I}$.
	
	Next, we consider the case where we obtain a pair of two distinct elements $(b_1, x_1)$ and $(b_2, y_2)$ in $\{ 0, 1 \}^k \times \{ 0, 1 \}^n$ such that $p(b_1, x_1) = p(b_2, x_2)$ and $h(b_1, x_1) = h(b_2, x_2)$.
	By definition, it holds that $b_1 = b_2$ and $x_1 \neq x_2$.
	We immediately obtain a solution to $\mathcal{I}$ if at least one of $x$ and $y$ is invalid. Hence, in the rest of this proof, we suppose that both $x$ and $y$ are valid.
	Then, exactly one of the following holds:
	\begin{enumerate*}[label = (\roman*)]
		\item $S(x_1) = x_1$, $S(x_2) = x_1 \neq x_2$, and $P(S(x_2)) = x_2$; or
		\item $S(x_1) = x_2 \neq x_1$, $P(S(x_1)) = x_1$, and $S(x_2) = x_2$.
	\end{enumerate*}	
	Without loss of generality, we can assume that the former case holds.
	This implies that the node $x_1$ in $\{ 0, 1 \}^{n}$ is a solution to the original instance $\mathcal{I}$.
\end{proof}

% \section{Open Questions}
% This paper has proven a better complexity upper bound of the $\TFNP$ problem \WardSzabo.
% To prove the $\PPP$ membership of \WardSzabo, we have introduced the new variant of \Pigeon, called \CategorizedPigeon.
% As shown in \cref{Sec:BasicProperties_of_CPPP}, \CategorizedPigeon is unlikely to be robust.
% An interesting open question is whether \InjectiveCategorizedPigeon and \CategorizedPigeon are $\PPP$-complete.

% This paper has focused only on the complexity upper bound of the problem \WardSzabo.
% It is still open whether \WardSzabo is $\PPP$- or $\PWPP$-complete.
% We believe that it is an interesting question worth considering whether \WardSzabo is \textit{abundant}. The notion of abundant is introduced by \citeauthor{Li24} \cite{Li24}.
% Roughly speaking, the question is whether we can guarantee the existence of bichromatic triangles even if a negligible fraction of the edge coloring is unknown.

%%%
% End: Main Manuscript
%%%

\section*{Acknowledgment}
This work was supported by JSPS KAKENHI, Grant-in-Aid for Early-Career Scientists, Grant Number JP25K21155.

The author deeply appreciates the helpful comments of Surendra Ghentiyala, who generously shared the $\PPP$-hardness proof of \WardSzabo and offered thoughtful feedback. This feedback has greatly contributed to the improvement of this paper.

\printbibliography

\end{document}